\documentclass[onecolumn,12pt]{article}
\usepackage{amsfonts}
\usepackage{bbm}
\usepackage{authblk}

\usepackage{geometry}
\usepackage{amsmath}
\usepackage{amssymb}
\usepackage{amsthm}
\usepackage{graphicx}
\usepackage{subfigure}
\usepackage{rotating}
\usepackage{lscape}
\usepackage{paralist}
\usepackage{mathrsfs}
\usepackage{lettrine}
\usepackage[rm,small,compact]{titlesec}
\usepackage{bibentry}
\usepackage[round,authoryear]{natbib}
\usepackage{abstract}
\usepackage{mathptmx}
\usepackage[scaled=0.92]{helvet}
\usepackage{courier}
\usepackage[T1]{fontenc}
\usepackage{enumitem}
\usepackage{titling}
\usepackage{mathtools}
\usepackage{times}
\usepackage{color}
\usepackage{lineno}
\usepackage{longtable}
\usepackage[varg]{txfonts}
\usepackage[titletoc,title]{appendix}
\usepackage{yfonts}
\usepackage{arydshln}
\usepackage{dsfont}
\usepackage{float}

\theoremstyle{plain}
\theoremstyle{definition}

\newtheorem{lemma}{\textsc{Lemma}}[section]
\newtheorem{theorem}{\textsc{Theorem}}[section]
\newtheorem{corollary}{\textsc{Corollary}}[section]

\newtheorem{proposition}{\textsc{Proposition}}[section]

\newtheorem{axiom}{\textsc{Axiom}}[section]
\newtheorem*{theorem*}{\textsc{Theorem}}

\theoremstyle{definition}

\newtheorem*{assertion*}{\textsc{Assertion}}

\newlist{Axiom}{enumerate}{1}
\setlist[Axiom]{label=Axiom A\arabic*.}

\makeatletter
\renewenvironment{proof}[1][\proofname]{\par
\vspace{-10pt}
    \pushQED{\qed}%
    \normalfont \partopsep=\z@skip \topsep=\z@skip
  \trivlist
  \item[\hskip\labelsep
        \itshape
    #1\@addpunct{.}]\ignorespaces
}{%
    \popQED\endtrivlist\@endpefalse
} \makeatother

 \makeatletter
    \renewcommand{\thefigure}{\ifnum \c@section>\z@ \thesection.\fi \@arabic\c@figure}
    \renewcommand{\thetable}{\ifnum \c@section>\z@ \thesection.\fi \@arabic\c@table}
    \makeatother

\renewcommand{\proofname}{\textsc{Proof}.}

\makeatletter 
\@addtoreset{equation}{section}
\makeatother  

\makeatletter
\newcommand{\rmnum}[1]{\romannumeral #1}
\newcommand{\Rmnum}[1]{\expandafter\@slowromancap\romannumeral #1@}
\makeatother

\makeatletter
\newcommand*\bigcdot{\mathpalette\bigcdot@{.5}}
\newcommand*\bigcdot@[2]{\mathbin{\vcenter{\hbox{\scalebox{#2}{$\m@th#1\bullet$}}}}}
\makeatother

\DeclareSymbolFont{euex}{U}{euex}{m}{n}
\DeclareMathSymbol{\varint}{\mathop}{euex}{"52}
\DeclareMathAlphabet{\mathpzc}{OT1}{pzc}{m}{it}

\setenumerate[1]{itemsep=0pt,partopsep=0pt,parsep=\parskip,topsep=1pt}

\begin{document}

\setlength{\abovedisplayshortskip}{5pt}
\setlength{\belowdisplayshortskip}{5pt}
\setlength{\abovedisplayskip}{5pt}
\setlength{\belowdisplayskip}{5pt}

\title{\bf \LARGE{Aggregation of Time Preferences and Comparability of Intergenerational Utilities}}

\author{Wei Ma\thanks{Corresponding author: Center for Economic Research, Shandong University, Jinan, 250100, China. Email: wei.ma@sdu.edu.cn.}}

\date{}
 \maketitle
\begin{onecolabstract}\noindent
Experimental evidence reveals that, in aggregating time preferences, a decision maker exhibits mild aversion to inequality across individuals' evaluations of a stream. Neither the utilitarian nor the maxmin aggregation rule can account for this finding. Building on principles concerning the comparability of intergenerational utilities and intergenerational fairness, this paper axiomatizes a family of aggregation rules that allow the decision maker to display varying degrees of inequality aversion. These rules fit well with experimental data.
\end{onecolabstract}
\textbf{Keywords}: Aggregation of time preference; Inequality aversion; Utility comparability; Intergenerational fairness\\

\section{Introduction}\label{section:Introduction}
In reality, many important economic decisions are made by collectives rather than individuals and have intergenerational ramifications. A good case in point is climate change. Its resolution demands global collaboration. To evaluate an environmental policy, therefore, we need to aggregate time preferences of different individuals, which normally exhibit substantial heterogeneity \citep{Weitzman2001, Falk2018}. On the other hand, experimental evidence indicates that in aggregating time preferences, decision-makers (DM) typically demonstrate a moderate degree of aversion to the welfare inequality across individuals. For instance, in the experiment of \cite{Jackson2014}, almost all the participants display inequality aversion, but for most of them the degree of aversion is not significant.\footnote{It should be stressed that by inequality here we mean inequality across the individuals' evaluations of a stream, not intergenerational inequality. \cite{Jackson2014} fit the model, $a\times \text{mean of payoffs}-(1-a)\times \text{standard deviation of payoffs}$, with each participant's choice data. The value of $a$ providing the best fit is recorded. Depending on whether it is exactly equal to one or less than one, the participant is regarded as  inequality-neutral or  inequality-averse, respectively. \citeauthor{Jackson2014} find that more than 75\% of the participants has a parameter $a$ no less than $0.7$.} What aggregation rules can account for such behavior? What are the rationales for adopting those rules?

The most common rule for aggregating time preferences is the utilitarian rule \citep{Weitzman2001}. It is, however, inequality-neutral. An inequality-averse aggregation rule often discussed in the literature is the maxmin rule, but this rule takes inequality aversion to the extreme. Therefore, neither of these two rules can account for the experimental data of \cite{Jackson2014}. In this paper, we propose several new aggregation rules that allow the DM to exhibit varying degrees of inequality aversion, and provide an axiomatic characterization of these rules.

We follow the social choice theory \citep{Sen1970a, Aspremont1977} by putting forward some axioms capturing intergenerational comparison of utility and intergenerational fairness. In particular, we build on \cite{Chambers2018a} (CE). Assume there are a finite number of individuals, each of them having an exponential discounted utility (EDU). The objects of choice are utility streams over infinitely many generations. Two axioms in CE are concerned with intergenerational comparison of utility. The first specifies that the utilities of different generations are co-cardinal: they have a common origin and a common scale. It is formally expressed as  the DM's preference being invariant under a common positive affine transformation of all generations' utilities. This axiom is not compatible with some well-known models that capture inequality aversion, most notably the mean-variance model. As such, we shall not incorporate this axiom in our paper.

The second axiom, termed IOU, stipulates that the DM's preference is invariant to generation-specific translations of utility origins. This axiom permits intergenerational comparison of utility changes, but prohibits intergenerational comparison of utility levels. Because of the latter, it rules out inequality aversion based on the individuals' utility levels. For this reason, the bulk of the paper is devoted to relaxing IOU in one way or another and examining what aggregation rules arise.

Specifically, we first relax IOU by requiring the DM's preference to be invariant only under a common translation of the utility origins of all generations. For ease of reference, call it COU.  We then introduce another two axioms of CE. One captures intergenerational fairness: the DM prefers to smooth utility streams across generations; the other is a unanimous condition: if every individual prefers one stream to another, so does the DM.  Altogether, these three axioms along with a technical condition imply the DM's preference is a variational preference in the sense of \cite{Maccheroni2006} (Theorem~\ref{Theorem: VP}). This preference includes as special cases a refined version of the mean-variance model \citep{MacCheroni2009} and the multi-utilitarian model \citep{Chambers2018a}.

In Theorem~\ref{Theorem: VP}, each individual's time preference is represented by the average discounted utility function. That is, for an individual with discount factor $\delta$, his  time preference is represented by the function $(1-\delta)\sum_{t=0}^{\infty}\delta^t x_t$ for a stream $(x_0, x_1, \ldots)$. Another widely used representation is by the total discounted utility function $\sum_{t=0}^{\infty}\delta^t x_t$. Although they are equivalent for the individual, the corresponding variational preferences of the DM will be quite different. To see this, note that the variational preference contains the maxmin preference as a special case. Suppose there are two individuals with discount factors $\delta_1$ and $\delta_2$. Then the two preferences, $\min\{(1-\delta_1)\sum_{t=0}^{\infty}\delta_1^t x_t, (1-\delta_2)\sum_{t=0}^{\infty}\delta_2^t x_t\}$ and $\min\{\sum_{t=0}^{\infty}\delta_1^t x_t, \sum_{t=0}^{\infty}\delta_2^t x_t\}$, are far from  identical.

It is then natural to ask under what condition each individual's time preference admits a total discounted utility representation within the DM's variational preference framework. The answer lies in another weakening of IOU: the DM's preference is invariant to translations of the present generation's origin of utility. We refer to this axiom as POU. Replacing COU in Theorem~\ref{Theorem: VP} with POU, we deduce that the DM's preference has a variational preference representation in which each individual's time preference is represented by the total discounted utility function (Theorem~\ref{Theorem: VP total}). POU allows for intergenerational comparison of utility changes and comparison of future generations' utility levels. In the terminology of \cite{Sen1970}, it captures partial comparability of intergenerational utility.

So far, we have been dealing with comparability and fairness of the generations' utilities. However, there is another agent in this context: the individuals, who have differing discount factors. How is this conflict to be resolved? We call an individual unconcerned with respect to two streams if he is indifferent between them. We propose an axiom requiring the DM's ranking of any two streams to be invariant with respect to the evaluations of unconcerned individuals. Augmenting Theorem~\ref{Theorem: VP total} with this axiom yields a weighted utility representation of the DM's preference (Proposition~\ref{Proposition: exponential}). In particular, the DM assigns a weight to each individual, evaluates the individual's total discounted utility via the exponential function $-e^{-t}$, and uses the resulting weighted sum to rank streams.

The function $-e^{-t}$ indicates the DM is inequality-averse. Sometime, it is favourable to have a more flexible function to capture inequality aversion. For this reason, we introduce a third weakening of IOU. That is, if a stream points in a preference-increasing direction when all generations have a common origin of $0$, it continues to do so when the origin of the present generation is shifted while those of all future generations remain fixed at $0$. Substituting this axiom for POU in Proposition~\ref{Proposition: exponential} again delivers a weighted utility representation of the DM's preference, but now each individual's total discounted utility is evaluated via a strictly concave function (Theorem~\ref{Theorem: SOEU total}).

To prove the above results, we adopt a strategy different from that of CE. Specifically, we consider each individual as a possible state of nature. We map each stream to an act defined on the set of states whose consequence in each state is the corresponding individual's evaluation of the stream. We then translate the above axioms into properties about acts. Finally, applying the relevant decision theories under uncertainty will give us the desired results. Our strategy departs from the conventional analogy between intertemporal choice and choice under uncertainty, wherein it is time periods, not individuals, that are treated as states of nature.

To assess how well our aggregation rules can account for the experimental data of \cite{Jackson2014}, we take four models and fit them to the data. The first is the refined mean-variance model mentioned above, the second is a special case of the multi-utilitarian model, the third is the preference in Proposition~\ref{Proposition: exponential}, and the fourth is an instance of the preference in Theorem~\ref{Theorem: SOEU total}. We find their performance is nearly identical, with their scores ranging from $0.8$ to $0.83$. Their fitness is also comparable with that of the mean-standard deviation model adopted in \cite{Jackson2014}, which has a score $0.82$.

\emph{Related Literature}.---This paper contributes to the literature on the aggregation of time preferences. The importance of this aggregation problem stems from the need of a social discount rate to evaluate public policies, such as climate mitigation and natural resource management. There are, however, inherent difficulties in determining the social discount rate \citep{Marglin1963, Feldstein1964, Caplin2004}. What value should be adopted is highly controversial \citep{Stern2007, Nordhaus2007}, and the issue is largely considered ethical \citep{Millner2020}. To address the issue, the social choice approach seeks to derive a social discount rate from individual time preferences under certain normative conditions.

The most widely used condition is the Pareto condition. This condition turns out to be incompatible with the stationarity of the social preference, unless the preference is dictatorial \citep{Zuber2011, Jackson2015}. And it is compatible with time consistency of the social preference only under very special conditions: \cite{Millner2018} show that a weighted utilitarian function can be time consistent only if the weights take a special form \citep{Ma2023a}. 

These negative results lead to two lines of inquiry. One line studies the weakening of the Pareto condition to recover the stationarity of the social preference \citep{Feng2018, Hayashi2021, Billot2025}. The other line retains the condition and examines the consequent aggregation rules \citep{Chambers2018a} and the properties of the corresponding social discount rate \citep{Gollier2005}. Three commonly encountered rules are the multi-utilitarian rule, the maxim rule, and, most notably, the weighted utilitarian rule \citep{Weitzman2001, Freeman2015}. \citet{Chambers2018a} characterize the three rules in which the time preference of each individual is represented by the average discounted utility function. Their work is extended by \cite{Dong-xuan2024a}, who propose, among others, an aggregation method called variational discounting. This method bears a formal resemblance to the one in our Theorem~\ref{Theorem: VP}. But they differ substantially in content (and hence in the axiomatic foundation and proof technique): the cost function in variational discounting is defined on the set of discount factors while the cost function in Theorem~\ref{Theorem: VP} is defined over probability distributions on the set of individuals. Moreover, their motivations are distinct. \cite{Dong-xuan2024a} attempt to find a \lq\lq robust\rq\rq solution to the issue of multiple discount rates. In contrast, the present paper is motivated by the experimental result of  \cite{Jackson2014} and contributes additional aggregation rules that allow a DM to exhibit varying degrees of aversion to welfare inequality across individuals, and in which each individual's time preference can be represented by either the average or total discounted utility function.

The paper is organized as follows. Section~\ref{section:The Setup} describes the setup. Section~\ref{section:Generation} presents the axioms regarding the comparability of intergenerational utilities and intergenerational fairness, and derives the corresponding aggregation rules. Section~\ref{section:Experts} introduces the axioms for resolving individuals' conflicting discount factors and the aggregation rules they imply. Section~\ref{section:Application} applies these aggregation rules to explain the experimental data of \cite{Jackson2014}. Section~\ref{section:Conclusion} concludes.

\section{The Setup}\label{section:The Setup}
Let $\mathbb{T}=\{0,1,2,\ldots \}$ be the set of nonnegative integers, each being interpreted as a generation. The objects of choice are streams, $x=\{x_t\}_{t\in \mathbb{T}}$, where $x_t$ is the utility level for generation $t$. We consider the set  $\ell_{\infty}$ of bounded streams, i.e.
$$\ell_{\infty}=\left\{\{x_t\}_{t\in \mathbb{T}}: \sup_{t\in \mathbb{T}} |x_t|<\infty\right\},$$
which is endowed with the supremum norm $\|x\|_{\infty}=\sup_{t\in \mathbb{T}} |x_t|$. For a scalar $\theta\in \mathbb{R}$, we abuse notation and use it to also denote the constant stream $(\theta, \theta, \ldots)$.

Suppose that there are $N$ individuals in society, each being an exponential discounter. The  total discounted utility function of the individual with discount factor $\delta$ is given by
$$\sum_{t\in \mathbb{T}} \delta^t x_t,$$ 
which, for ease of notation, will be written $\langle \delta, x\rangle$. Let $D=\{\delta_1, \ldots, \delta_N\}\subset (0,1)$ be the set of the individuals' discount factors, where $\delta_i\neq \delta_j$ for $i\neq j$. How should a decision maker (DM) aggregate these discount factors into a single decision criterion? The most common aggregation approach is the utilitarian approach. But the experimental study of \cite{Jackson2014} indicates that besides the aggregate utility, the DM also cares about the dispersion of the individuals' evaluations. Therefore, in this paper, we aim to characterize some aggregation rules that take into account both of them.

\section{Axioms concerning Generations and Aggregation Results}\label{section:Generation}

\subsection{Axioms}\label{section:Axioms}
Suppose that the DM has a preference relation (i.e. a weak order) $\succeq$ on $\ell_{\infty}$. Let $\succ$ and $\sim$ be the asymmetric and symmetric parts of $\succeq$, respectively.  Let $x, y, z$ be any three streams in $\ell_{\infty}$ and $\theta$ a constant stream. In this paper, we consider the following properties on $\succeq$.
\begin{axiom}[D-monotonicity]\label{axiom: D-monotonicity}
(a) If $\langle\delta, x\rangle \geq \langle\delta, y\rangle$ for all $\delta\in D$, then $x\succeq y$; and (b)  if, additionally, $\langle\delta, x\rangle > \langle\delta, y\rangle$ for some $\delta\in D$, then $x\succ y$.
\end{axiom}

\begin{axiom}[Co-cardinality]\label{axiom: Co-cardinality}
For any scalar $a>0$, $x\succeq y$ if and only if (iff) $ax+\theta\succeq ay+\theta$.
\end{axiom}

\begin{axiom}[Invariance with respect to individual origins]\label{axiom: IIOU}
$x\succeq y$ iff $x+z\succeq y+z$.
\end{axiom}

\begin{axiom}[Continuity]\label{axiom: Continuity}
The sets $\{z\in \ell_{\infty}: z\succeq x\}$ and $\{z\in \ell_{\infty}: x\succeq z\}$ are closed.
\end{axiom}

\begin{axiom}[Convexity]\label{axiom: Convexity}
(a) If $x \thicksim y$, then $\alpha x +(1-\alpha)y\succeq x$ for any $\alpha\in [0,1]$; and (b) there exists a pair of streams $x', y'$ satisfying $x' \thicksim y'$ and $\alpha x' +(1-\alpha)y'\succ x'$ for some $\alpha\in (0, 1)$.
\end{axiom}

The first four axioms are drawn from \cite{Chambers2018a} and the last one is a variation of their convexity axiom; we therefore only briefly discuss their interpretation. For a more detailed and critical analysis, see Section~2.4 of \cite{Chambers2018a}. Axiom~\ref{axiom: D-monotonicity} (i.e. parts (a) and (b) together) is usually called the strong Pareto condition, while part (a) alone is called the weak Pareto condition. Axioms~\ref{axiom: Co-cardinality} and \ref{axiom: IIOU} are concerned with intergenerational comparison of utility. The former implies that both absolute utility levels and their differences are intergenerationally comparable, whereas the latter dictates that only differences in utility can be compared. Axioms~\ref{axiom: Co-cardinality} will not be used in this paper. It is presented here only for the sake of comparison with other axioms. Axiom~\ref{axiom: Continuity} is a technical requirement. Finally, Axiom~\ref{axiom: Convexity} is a preference for intergenerational fairness: The DM prefers to smooth utility streams across generations. 

Among the five axioms, Axiom~\ref{axiom: IIOU} is arguably the most controversial. It permits each generation to have its own origin of utility, thereby rendering utility levels incomparable. To relax it, we introduce some weaker axioms. The most obvious one is to assume a common origin of utility across all generations.
\begin{axiom}[Invariance with respect to common origins]\label{axiom: ICOU}
 $x\succeq y$ iff $x+\theta\succeq y+\theta$.
\end{axiom}

This axiom also appears in \cite{Dong-xuan2024a} and \cite{Dong-xuan2025}. As opposed to Axiom~\ref{axiom: IIOU},  it allows for intergenerational comparability of utility levels. To compare these two axioms, consider an example of \cite{Chambers2018a}. Given four utility streams: $x=(10, 8, 0, 0,\ldots)$, $y=(14, 4, 0, 0, \ldots)$, $x'=(10, 1008, 0, 0,\ldots)$, and $y'=(14, 1004, 0, 0, \ldots)$, the latter two streams are obtained from the first two by shifting the origin of generation $1$ by $1000$. In stream $y$, generation $1$ is in a relatively poor state (with a utility of $4$) compared to $y'$. The DM might prefer to transfer wealth from generation $0$ to generation $1$ when $1$ is in a bad state, thus favoring $x$ over $y$. However, she may be unwilling to make the same transfer when generation $1$ is in a good state, thereby preferring $y'$ over $x'$. This natural pair of preferences violates Axiom~\ref{axiom: IIOU}, but is compatible with Axiom~\ref{axiom: ICOU}.

What aggregation rule can arise from the weaker Axiom~\ref{axiom: ICOU}? Let $\Delta(D)$ be the set of  probability measures on $D$. A function  $c: \Delta(D)\rightarrow [0,\infty]$ is said to be nondegenerate if there are at least two distinct elements $\mu\in \Delta(D)$ satisfying $c(\mu)<\infty$, i.e. if the set $\{\mu\in \Delta(D): c(\mu)<\infty\}$ is not a singleton.

\begin{theorem}\label{Theorem: VP}
The following conditions are equivalent:\\
(\rmnum{1}) The relation $\succeq$ satisfies Axioms~\ref{axiom: D-monotonicity}(a), \ref{axiom: Continuity}, \ref{axiom: Convexity}, \ref{axiom: ICOU}.\\
(\rmnum{2}) There exists a grounded, convex, lower semi-continuous, and nondegenerate  function $c: \Delta(D)\rightarrow [0,\infty]$ such that for all $x, y\in \ell_{\infty}$,
\begin{equation}\label{equation: VP}
  x\succeq y \Leftrightarrow \min_{\mu\in \Delta(D)}\left(\sum_{\delta\in D}\Big((1-\delta)\langle \delta, x\rangle\mu(\delta)\Big)+c(\mu)\right) \geq \min_{\mu\in \Delta(D)}\left(\sum_{\delta\in D}\Big((1-\delta)\langle \delta, y\rangle\mu(\delta)\Big)+c(\mu)\right).
\end{equation}
\end{theorem}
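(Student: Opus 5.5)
The strategy is to treat each individual as a state of nature and reduce the statement to the variational-preference theorem of \cite{Maccheroni2006}. Let $S=D$ be the finite state space. Define the linear map $T:\ell_\infty\to\mathbb{R}^D$ by $T(x)(\delta)=(1-\delta)\langle\delta,x\rangle$; this is the average discounted utility, so that $T(\theta)=\theta\mathbf{1}$ for every constant stream $\theta$. The map $T$ is continuous since $|T(x)(\delta)|\le\|x\|_\infty$.

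The key preliminary claim is that $T$ is surjective onto $\mathbb{R}^D$. Because the $\delta_i$ are distinct, the vectors $(\delta_i^t)_{i}$ for $t=0,\dots,N-1$ are linearly independent, by the Vandermonde determinant. Hence finitely supported streams already span $\mathbb{R}^D$ under $T$. By Axiom~\ref{axiom: D-monotonicity}(a), $T(x)=T(y)$ implies $x\sim y$. We may therefore define a relation $\succeq^*$ on acts $f\in\mathbb{R}^D$ by $f\succeq^* g$ iff $x\succeq y$ for some (equivalently any) $x\in T^{-1}(f)$ and $y\in T^{-1}(g)$.

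Next, the axioms transfer to $\succeq^*$:
\begin{itemize}
\item Weak Pareto gives monotonicity of $\succeq^*$.
\item Axiom~\ref{axiom: ICOU} gives $f\succeq^* g$ iff $f+\theta\mathbf{1}\succeq^* g+\theta\mathbf{1}$, using $T(x+\theta)=T(x)+\theta\mathbf{1}$.
\item Linearity of $T$ transfers convexity (a), i.e.\ uncertainty aversion, and the strict part (b).
\end{itemize}

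Continuity needs more care. Since $T$ is a surjective linear map between Banach spaces, it is open by the open mapping theorem. Alternatively, one can use an explicit continuous linear right inverse $R:\mathbb{R}^D\to\ell_\infty$ built from the Vandermonde solution. Then $\{f:f\succeq^* g\}=R^{-1}(\{z:z\succeq R(g)\})$ is closed.

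On $\mathbb{R}^D$, with consequences in $\mathbb{R}$, mixtures of acts are pointwise convex combinations. Translation invariance by constants then gives weak certainty independence in the sense of \cite{Maccheroni2006}, in its translation form. Here is the precise route. Monotonicity and continuity give each $f$ a certainty equivalent $I(f)$ with $f\sim^*I(f)\mathbf{1}$; constants are strictly ranked by weak Pareto applied to constants, since $T(\theta)=\theta\mathbf{1}$. Then:
\begin{itemize}
\item $I$ is monotone.
\item $I$ is normalized, $I(\theta\mathbf{1})=\theta$.
\item $I$ is translation invariant, $I(f+\theta)=I(f)+\theta$, by Axiom~\ref{axiom: ICOU}.
\item $I$ is quasiconcave, by convexity (a).
\end{itemize}
Monotone, translation-invariant, quasiconcave normalized functionals are concave; this is the standard step in \cite{Maccheroni2006}. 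The niveloid duality then gives
\[
I(f)=\min_{\mu\in\Delta(D)}\Big(\sum_\delta f(\delta)\mu(\delta)+c(\mu)\Big),\qquad c(\mu)=\sup_f\Big(I(f)-\sum_\delta f\mu\Big),
\]
with $c$ grounded, convex, and lower semicontinuous. Composing with $T$ yields \eqref{equation: VP}.

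Nondegeneracy comes from convexity (b). If $\{c<\infty\}=\{\mu_0\}$, then $I$ is linear, so $I(\alpha f+(1-\alpha)g)=I(f)$ whenever $I(f)=I(g)$, contradicting the existence of $x'\sim y'$ with a strictly preferred mixture. The converse direction is routine:
\begin{itemize}
\item A min of affine functions plus a cost is concave, monotone, and translation invariant, giving convexity (a), weak Pareto, and Axiom~\ref{axiom: ICOU}.
\item It is $1$-Lipschitz in sup norm, giving continuity.
\item A nondegenerate $c$ makes $I$ non-affine on some line. Take two $\mu$'s with finite cost. Since $I$ is concave and not affine, there exist $f,g$ with $I(f)=I(g)$ and $I$ strictly greater at a midpoint; surjectivity of $T$ then realizes these acts as streams, giving convexity (b).
\end{itemize}
This last point is the main obstacle I expect. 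It requires showing that a concave niveloid with nonsingleton effective domain of $c$ is not affine, hence has a strictly concave segment on a level set. I would argue this via $c$ being the conjugate, so that $I$ affine would force $\mathrm{dom}\,c$ to be a singleton, and then adjust levels by translation invariance.
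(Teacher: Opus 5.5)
Your route is the paper's. You map streams to acts on $D$ via $x\mapsto((1-\delta)\langle\delta,x\rangle)_{\delta\in D}$, use the Vandermonde matrix for surjectivity and continuity, and transfer the axioms. You then build the certainty-equivalent functional $I$, apply the niveloid results of \cite{Maccheroni2006}, and get nondegeneracy from Axiom~\ref{axiom: Convexity}(b) via ``$I$ affine iff $\mathrm{dom}(c)$ is a singleton.'' Your conjugacy argument for that equivalence is fine, since a proper, convex, lsc $c$ is recovered from $I$ by Fenchel--Moreau; the paper instead notes that $t\mapsto I(tf)$ would lie below two affine functions with different slopes.

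There is one step that fails as written. You claim that constants are strictly ranked ``by weak Pareto applied to constants.'' Axiom~\ref{axiom: D-monotonicity}(a) only says that weakly higher evaluations for every individual give $x\succeq y$. Even when $\theta>\theta'$, it yields only $\theta\mathbf{1}\succeq^*\theta'\mathbf{1}$, never strict preference. Indeed, the relation under which all streams are indifferent satisfies Axiom~\ref{axiom: D-monotonicity}(a), Axiom~\ref{axiom: Continuity}, Axiom~\ref{axiom: Convexity}(a), and Axiom~\ref{axiom: ICOU}. Without strictness on constants the certainty equivalent $I(f)$ need not be unique. Your later claims that $I$ is well defined, normalized, monotone, and translation invariant all depend on that uniqueness, so they are unsupported as stated.

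The missing idea is that strictness on constants must come from Axiom~\ref{axiom: ICOU} together with Axiom~\ref{axiom: Convexity}(b), exactly as in the paper. Suppose $\theta\mathbf{1}\sim^*\theta'\mathbf{1}$ with $d=\theta-\theta'>0$.
\begin{itemize}
\item Translating by constants gives $d\mathbf{1}\sim^*0\sim^*-d\mathbf{1}$.
\item By induction, $kd\mathbf{1}\sim^*0\sim^*-kd\mathbf{1}$ for every natural number $k$.
\item For any act $f$, choose $k$ with $kd\ge\max_\delta|f(\delta)|$. Axiom~\ref{axiom: D-monotonicity}(a) gives $kd\mathbf{1}\succeq^*f\succeq^*-kd\mathbf{1}$, hence $f\sim^*0$.
\end{itemize}
So $\succeq^*$, and with it $\succeq$, would be the trivial relation, contradicting Axiom~\ref{axiom: Convexity}(b). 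If you insert this argument before defining $I$, the rest of your proof goes through.
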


Representation~\eqref{equation: VP} can be interpreted as follows. The DM is unsure about the appropriate weights for each individual, and thus entertains all possible weight vectors in $\Delta(D)$. Each such weight vector $\mu$ is associated with a cost $c(\mu)$, with lower costs assigned to those that are more plausible. 

{\bf Remark}. \cite{Dong-xuan2024a} propose the concept of variational discounting, which compares two streams according to the following criterion:
$$x\succeq y \Leftrightarrow \min_{\delta\in [0,1)}\left\{(1-\delta)\langle \delta, x\rangle+c(\delta)\right\} \geq \min_{\delta\in [0,1)}\left\{(1-\delta)\langle \delta, y\rangle+c(\delta)\right\}.$$
The distinction between variational discounting and \eqref{equation: VP} is that the cost function $c$ is defined on $[0,1)$ in the former while it is defined on $\Delta(D)$ in the latter.

Note that in Theorem~\ref{Theorem: VP}, each individual's time preference is represented by the average discounted utility function $(1-\delta)\langle \delta, x\rangle$. An equally popular representation is by the total discounted utility function $\langle \delta, x\rangle$ \citep{Weitzman2001, Gollier2005, Jackson2014}. Then what axioms would imply it? It turns out that it suffices to modify Axiom~\ref{axiom: ICOU} into an invariance axiom with respect to generation $0$'s origin. Specifically, let
$$\ell_{\infty}^0=\{(z_0, z_1,\ldots)\in \ell_{\infty}: z_t=0 \forall t\geq 1\}.$$
\begin{axiom}[Invariance with respect to generation $0$'s origin]\label{axiom: I0OU}
 $x\succeq y$ iff $x+z^0\succeq y+z^0$ for all $z^0\in \ell_{\infty}^0$.
\end{axiom}

Axiom~\ref{axiom: I0OU} treats the present generation (i.e. generation 0) and the future generations (i.e. generations $t$ with $t\geq 1$) asymmetrically. It requires the social preference to be translation invariant with respect to the present generation's origin only. This can be justified from the utility-measurement perspective. For the present generation, we can derive its preference by observing its choices. However, standard microeconomic tools (like von Neumann-Morgenstern utility) generally yield utility measures unique only up to a positive affine transformation. This means the origin of the present generation is an arbitrary choice of our measurement scale, and, as such, a social preference should be invariant to it. However, for future generations, we cannot derive their preferences through observing their choices because they do not yet exist. Instead, we can only evaluate their welfare using some objective and absolute metrics like resource availability, climate stability, and biological survival. These metrics all possess a natural zero.

With Axiom~\ref{axiom: I0OU} taking the place of Axiom~\ref{axiom: ICOU}, we have the following counterpart of Theorem~\ref{Theorem: VP}.
\begin{theorem}\label{Theorem: VP total}
The following conditions are equivalent:\\
(\rmnum{1}) The relation $\succeq$ satisfies Axioms~\ref{axiom: D-monotonicity}(a), \ref{axiom: Continuity}, \ref{axiom: Convexity}, \ref{axiom: I0OU}.\\
(\rmnum{2}) There exists a grounded, convex, lower semi-continuous, and nondegenerate  function $c: \Delta(D)\rightarrow [0,\infty]$ such that for all $x, y\in \ell_{\infty}$,
\begin{equation}\label{equation: VP total}
  x\succeq y \Leftrightarrow \min_{\mu\in \Delta(D)}\left(\sum_{\delta\in D}\langle \delta, x\rangle\mu(\delta)+c(\mu)\right) \geq \min_{\mu\in \Delta(D)}\left(\sum_{\delta\in D}\langle \delta, y\rangle\mu(\delta)+c(\mu)\right).
\end{equation}
\end{theorem}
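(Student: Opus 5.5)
The plan is to follow the strategy announced in the Introduction: treat each individual as a state of nature and reduce Theorem~\ref{Theorem: VP total} to the variational-preference theorem of \cite{Maccheroni2006}. Define the evaluation map $T:\ell_\infty\to\mathbb{R}^D$ by $T(x)(\delta)=\langle\delta,x\rangle$. The first step is to show that $T$ is linear, continuous (since $|\langle\delta,x\rangle|\le\|x\|_\infty/(1-\delta)$), and surjective onto $\mathbb{R}^D$. For surjectivity, consider streams supported on generations $0,\dots,N-1$. Then $T$ restricted to these streams is a Vandermonde matrix in the distinct $\delta_i$, so it is invertible. Moreover, every constant act $k\mathbf{1}\in\mathbb{R}^D$ is the image of the stream $z^0=(k,0,0,\dots)\in\ell_\infty^0$, because $\langle\delta,z^0\rangle=k$ for all $\delta$. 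This is precisely why Axiom~\ref{axiom: I0OU} replaces Axiom~\ref{axiom: ICOU}: shifting generation $0$'s origin by $k$ shifts every individual's total discounted utility by exactly $k$. By contrast, a constant shift $\theta$ adds $\theta/(1-\delta)$, which is the source of the $(1-\delta)$ weights in Theorem~\ref{Theorem: VP}.

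The second step is to check that $\succeq$ factors through $T$. If $T(x)=T(y)$, then Axiom~\ref{axiom: D-monotonicity}(a) applied in both directions gives $x\sim y$. We can therefore define $\succeq^*$ on acts $f\in\mathbb{R}^D$ by setting $f\succeq^* g$ iff $x\succeq y$ for any preimages $x,y$. Next we translate the axioms. Weak order and monotonicity of $\succeq^*$ follow from D-monotonicity(a). Continuity of $\succeq^*$ follows because $T$ has a continuous linear right inverse (the Vandermonde inverse), so closed sets pull back correctly. Axiom~\ref{axiom: Convexity}(a) yields uncertainty aversion, since $T$ is linear and preimages of mixtures can be taken as mixtures of preimages. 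Axiom~\ref{axiom: I0OU} yields invariance of $\succeq^*$ under addition of constant acts, $f\succeq^* g\iff f+k\succeq^* g+k$.

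The third step invokes the representation. This is the step I expect to be the main obstacle, because Maccheroni et al.\ state their weak C-independence for mixtures $\alpha f+(1-\alpha)k$ rather than for translations. I would first derive weak C-independence from translation invariance together with continuity and monotonicity. The first sub-step is to construct a niveloid $I$ on $\mathbb{R}^D$ with $I(k)=k$ that represents $\succeq^*$, using the certainty equivalent that exists by monotonicity and continuity. The second sub-step is to show that $I(f+k)=I(f)+k$, which follows from translation invariance. The third sub-step is to show that $I$ is concave, which follows from the convexity axiom via the standard quasiconcave-plus-translation-invariant argument. An alternative route is to apply the niveloid duality (Theorem 3 of Maccheroni et al.) directly. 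Either way, this yields $I(f)=\min_{\mu\in\Delta(D)}\left(\sum_\delta f(\delta)\mu(\delta)+c(\mu)\right)$, where $c$ is grounded, convex, and lower semicontinuous. Nondegeneracy then comes from Axiom~\ref{axiom: Convexity}(b). If $c$ were finite only at a single $\mu$, then $I$ would be linear, and every indifferent pair would have indifferent mixtures, contradicting (b). Composing with $T$ gives \eqref{equation: VP total}.

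For the converse, the functional in \eqref{equation: VP total} is monotone in $T(x)$ and $1$-Lipschitz, which gives D-monotonicity(a) and continuity. It is concave, which gives Axiom~\ref{axiom: Convexity}(a). It shifts by $k$ under addition of $z^0$, since $\langle\delta,x+z^0\rangle=\langle\delta,x\rangle+z^0_0$, which gives Axiom~\ref{axiom: I0OU}. For Axiom~\ref{axiom: Convexity}(b), nondegeneracy together with convexity of $c$ shows that $I$ is not affine. We can then find acts $f\sim^*g$ with $I(\tfrac12 f+\tfrac12 g)>I(f)$, and lift them through the surjective $T$ to obtain the required streams $x',y'$.
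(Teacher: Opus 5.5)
Your plan follows the paper's proof of Theorem~\ref{Theorem: VP total} almost step for step. You use the same Vandermonde surjectivity of $\phi$ (your $T$), with $(k,0,0,\ldots)\mapsto k\mathbf{1}$, and the same induced relation $\succeq^*$ via D-monotonicity(a). You then build a certainty-equivalent functional $I$ and obtain concavity through the MMR Lemma~25 argument followed by duality. Nondegeneracy and Axiom~\ref{axiom: Convexity}(b) in the converse are handled as in the paper. What you call the main obstacle is not one: for real-valued acts, $\alpha f+(1-\alpha)k'=(\alpha f+(1-\alpha)k)+(1-\alpha)(k'-k)$, so weak C-independence is just translation invariance. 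That is exactly what the paper checks before citing MMR.

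There is, however, one step that does not go through as stated: constructing $I$ with $I(k)=k$ that represents $\succeq^*$ "using the certainty equivalent that exists by monotonicity and continuity."

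\begin{itemize}
\item Axiom~\ref{axiom: D-monotonicity}(a) is only weak Pareto. Monotonicity and continuity therefore give existence of some $\theta_f$ with $f\sim^*\theta_f\mathbf{1}$, but not its uniqueness, and not $k\mathbf{1}\succ^* k'\mathbf{1}$ for $k>k'$.
\item Indeed, the trivial relation (all streams indifferent) satisfies D-monotonicity(a), Continuity, Axiom~\ref{axiom: Convexity}(a) and Axiom~\ref{axiom: I0OU}.
\item Without strictness on constants, $I$ is not well defined, $I(k)=k$ fails, $f\succeq^* g\iff I(f)\ge I(g)$ fails, and $I(f+k)=I(f)+k$ cannot be derived.
\end{itemize}

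The missing ingredient is Axiom~\ref{axiom: Convexity}(b). You use it only at the end, for nondegeneracy; the paper uses it already here.
\begin{itemize}
\item Suppose $k\mathbf{1}\sim^* k'\mathbf{1}$ with $k>k'$. Translation invariance gives $(k-k')\mathbf{1}\sim^* 0\sim^*(k'-k)\mathbf{1}$.
\item By induction, $\pm n(k-k')\mathbf{1}\sim^* 0$ for every $n\in\mathbb{N}$.
\item Every act lies between $-n(k-k')\mathbf{1}$ and $n(k-k')\mathbf{1}$ for large $n$, so weak monotonicity forces $f\sim^* 0$ for all $f$. This contradicts the strict preference required by Axiom~\ref{axiom: Convexity}(b).
\end{itemize}
Hence distinct constants are strictly ordered, the certainty equivalent is unique, and your sub-steps then go through.

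The same point applies to your alternative route through MMR's Theorem~3: its non-degeneracy axiom must also be supplied by Axiom~\ref{axiom: Convexity}(b).
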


It is well known that \eqref{equation: VP total} contains as a special case the multi-utilitarian aggregation \citep{Chambers2018a}. Specifically, strengthen Axiom~\ref{axiom: I0OU} to the following one:
\begin{axiom}[Invariance with respect to generation $0$'s origin and scale]\label{axiom: I0OS}
 $x\succeq y$ iff $a x+z^0\succeq a y+z^0$ for all $a>0$  and all $z^0\in \ell_{\infty}^0$.
\end{axiom}

To understand the axiom, note that as argued above, the welfare of the future generations has a natural zero and the DM's preference should not be invariant to its translation. But the unit of future generations' welfare can be changed. For instance, we can use different units of measurement to gauge resource availability. Therefore, the DM's preference should be scale invariant.

In the  terminology of \cite{Sen1970}, Axiom~\ref{axiom: Co-cardinality} corresponds to full comparability of intergenerational utilities and  Axiom~\ref{axiom: IIOU} corresponds to unit comparability. Since Axioms~\ref{axiom: I0OU} and \ref{axiom: I0OS} lie in between those two axioms, they correspond to partial comparability of intergenerational utilities. With Axioms~\ref{axiom: I0OS}, we obtain a characterization of the multi-utilitarian aggregation rule.
\begin{corollary}\label{Corollary: multi-utilitarian}
The following conditions are equivalent:\\
(\rmnum{1}) The relation $\succeq$ satisfies Axioms~\ref{axiom: D-monotonicity}(a), \ref{axiom: Continuity}, \ref{axiom: Convexity}, and \ref{axiom: I0OS}.\\
(\rmnum{2}) There exists a nonempty, convex, closed, and non-singleton subset $\Sigma\subseteq \Delta(D)$ such that for all $x, y\in \ell_{\infty}$,
\begin{equation}\label{equation: multi-utilitarian}
  x\succeq y \Leftrightarrow \min_{\mu\in \Sigma}\left(\sum_{\delta\in D}\langle \delta, x\rangle\mu(\delta)\right) \geq \min_{\mu\in \Sigma}\left(\sum_{\delta\in D}\langle \delta, y\rangle\mu(\delta)\right).
\end{equation}
\end{corollary}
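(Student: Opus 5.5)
The plan is to deduce the corollary from Theorem~\ref{Theorem: VP total}, since Axiom~\ref{axiom: I0OS} implies Axiom~\ref{axiom: I0OU} (take $a=1$).

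\emph{From (i) to (ii).} Theorem~\ref{Theorem: VP total} gives a grounded, convex, lower semi-continuous, nondegenerate $c$ such that $\succeq$ is represented by $V(x)=\min_{\mu}\big(\sum_\delta \langle\delta,x\rangle\mu(\delta)+c(\mu)\big)$. It is cleaner to view $V$ as $I(\phi(x))$, where $\phi(x)=(\langle\delta,x\rangle)_{\delta\in D}\in\mathbb{R}^N$ and $I(\xi)=\min_\mu(\mu\cdot\xi+c(\mu))$ is a niveloid on $\mathbb{R}^N$. The map $\phi$ is onto $\mathbb{R}^N$: streams supported on the first $N$ generations already give a Vandermonde system, which is invertible because the $\delta_i$ are distinct. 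I would use scale invariance to show that $I$ is positively homogeneous.

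\textbf{Key step.} Axiom~\ref{axiom: I0OS} with $z^0=0$ gives $x\succeq y \iff ax\succeq ay$. The functional $I$ is normalized on constant vectors in the sense that $I(k\mathbf{1})=k$, because $c$ is grounded. A constant vector $k\mathbf{1}$ is attained as $\phi(z^0)$ with $z^0=(k,0,0,\ldots)\in\ell_\infty^0$, since $\langle\delta,z^0\rangle=k$ for every $\delta$.

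Now fix $x$ and set $k=I(\phi(x))$. Then $x\sim z^0$ for this $z^0$. Scaling gives $ax\sim az^0$, and $\phi(az^0)=ak\mathbf{1}$. Hence $I(\phi(ax))=ak$, so $I(a\xi)=aI(\xi)$ for all $\xi\in\mathbb{R}^N$ by surjectivity of $\phi$.

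A positively homogeneous concave niveloid has a conjugate $c$ taking values in $\{0,\infty\}$. The argument is standard: $c(\mu)=\sup_\xi (I(\xi)-\mu\cdot\xi)$, and scaling $\xi$ shows this supremum is either $0$ or $\infty$. So $I(\xi)=\min_{\mu\in\Sigma}\mu\cdot\xi$ with $\Sigma=\{c=0\}$. This set is nonempty because $c$ is grounded, convex because $c$ is convex, and closed because $c$ is lower semi-continuous. It is non-singleton because $c$ is nondegenerate, and $\{c<\infty\}=\Sigma$. This yields \eqref{equation: multi-utilitarian}.

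\emph{From (ii) to (i).} Setting $c=0$ on $\Sigma$ and $c=\infty$ off $\Sigma$ gives a function with the required properties, so Theorem~\ref{Theorem: VP total} delivers Axioms~\ref{axiom: D-monotonicity}(a), \ref{axiom: Continuity}, \ref{axiom: Convexity} and \ref{axiom: I0OU}. It remains to add scale invariance. Write $U(x)=\min_{\mu\in\Sigma}\sum_\delta\langle\delta,x\rangle\mu(\delta)$. This is positively homogeneous, and for $z^0\in\ell_\infty^0$ with first entry $z_0$ we have $U(ax+z^0)=aU(x)+z_0$. Therefore $x\succeq y \iff ax+z^0\succeq ay+z^0$, which is Axiom~\ref{axiom: I0OS}.

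\textbf{Main obstacle.} The main care lies in the positive-homogeneity step. The representation in Theorem~\ref{Theorem: VP total} is ordinal, so homogeneity of $I$ itself has to be extracted from scale invariance of the preference; the normalization argument via the constant-evaluation streams $z^0$ is what settles this.
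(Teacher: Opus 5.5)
Your proof is correct, but it takes a different route from the paper's. The paper proves Lemma~\ref{lemma: C-independence}, which says that Axiom~\ref{axiom: I0OS} makes the induced act preference $\succeq^*$ on $\mathfrak{F}$ satisfy Gilboa--Schmeidler's C-independence. It then gets the corollary by citing their maxmin theorem together with Theorem~\ref{Theorem: VP total}.

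You never leave the variational representation. You use the streams $(k,0,0,\ldots)$, whose evaluation vector is $k\mathbf{1}$, to turn the ordinal scale invariance $x\succeq y\Leftrightarrow ax\succeq ay$ into genuine positive homogeneity of the normalized niveloid $I$. The duality formula $c(\mu)=\sup_{\xi}\bigl(I(\xi)-\mu\cdot\xi\bigr)$ then forces $c$ to be the indicator of $\Sigma=\text{dom}(c)$. That formula holds for the $c$ of Theorem~\ref{Theorem: VP total} by Fenchel--Moreau, because that $c$ is proper, convex and lower semicontinuous on the compact set $\Delta(D)$; this is worth stating explicitly.

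What each route buys:
\begin{itemize}
\item The paper's route is shorter, but it rests on transporting an external theorem. It leaves implicit both why $\Sigma$ is non-singleton and the direction (ii)$\Rightarrow$(i).
\item Your route is self-contained. It shows directly that nondegeneracy of $c$ is exactly what makes $\Sigma$ non-singleton.
\item You handle the converse cleanly by feeding the indicator cost back into Theorem~\ref{Theorem: VP total} and checking the extra scaling by hand.
\item You also avoid the mixture formulation of Lemma~\ref{lemma: C-independence}, whose stated range $\alpha\in[0,1]$ should exclude $\alpha=0$ in any case.
\end{itemize}
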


\section{Axioms concerning Individuals and Aggregation Results}\label{section:Experts}

The main axioms in the preceding section are concerned with resolving conflict of different generations' utilities. But there is another agent in the context, viz. the individuals, who have conflicting discount rates. How to resolve this conflict? What aggregation rule can arise accordingly? In this section, we study these questions and assume $N\geq 3$.

We focus on the case where the individuals' time preferences are represented by total discounted utility functions, which will be utilized in the next section. The case where their time preferences are represented by average discounted utility functions can be treated likewise.

Note that Axiom~\ref{axiom: D-monotonicity} is a unanimous condition: when all individuals prefer one stream over another, the DM endorses this preference. However, how should the DM resolve the conflict when the individuals' preferences are not unanimous? The following axiom provides a partial solution. Suppose that there is a set $E^c\subset D$ of individuals who are indifferent between $x$ and $y$. Following \cite{Sen1970a}, we may call them unconcerned individuals with respect to the evaluation of $x$ and $y$. Then it seems natural to require that the welfare levels of the unconcerned individuals have no effect on the ranking of $x$ and $y$. To formalize this, consider two additional streams, $x'$ and $y'$. If every individual in $E=D\backslash E^c$ is indifferent between $x$ and $x'$ and between $y$ and $y'$, while every individual in $E^c$ is indifferent between $x$ and $y$ and between $x'$ and $y'$, it is reasonable to demand  $x\succeq y$ iff $x'\succeq y'$. This is a variant of the separability axiom of \cite{Aspremont1977}.
\begin{axiom}[Invariance with respect to unconcerned individuals]\label{axiom: IUE}
If for any four streams $x, y, x', y'$ and a subset $E\subset D$, $\langle\delta, x\rangle = \langle\delta, x'\rangle$, $\langle\delta, y\rangle = \langle\delta, y'\rangle$ for all $\delta\in E$, and $\langle\delta, x\rangle = \langle\delta, y\rangle$, $\langle\delta, x'\rangle = \langle\delta, y'\rangle$ for all $\delta\in E^c$, then $x\succeq y$ iff $x'\succeq y'$.
\end{axiom}

Armed with this axiom, we get a specific inequality-averse social welfare function. For any $\sigma>0$, let
\begin{equation}\label{equation: exponential}
\xi_{\sigma}(t)=
\begin{cases}
  -\exp(-\frac{t}{\sigma}), & \mbox{if } \sigma<\infty, \\
  t, & \mbox{if }\sigma=\infty.
\end{cases}
\end{equation}
Let $\Delta_{++}(D)=\{\mu\in \Delta(D): \mu\gg0\}$.\footnote{Given $f=(f_1, \ldots, f_N)$ and $g=(g_1, \ldots, g_N)$, $f\geq g$ means $f_i\geq g_i$ for all $i=1, \ldots, N$; $f>g$ means $f\geq g$ but $f\neq g$; and $f\gg g$ if $f_i>g_i$ for all $i=1, \ldots, N$.}  
\begin{proposition}\label{Proposition: exponential}
The following conditions are equivalent:\\
(\rmnum{1}) The relation $\succeq$ satisfies Axioms~\ref{axiom: D-monotonicity}, \ref{axiom: Continuity}, \ref{axiom: Convexity}, \ref{axiom: I0OU}, and \ref{axiom: IUE}.\\
(\rmnum{2}) There exists a probability measure $\nu\in \Delta_{++}(D)$ and $\sigma\in (0, \infty)$ such that for all $x, y\in \ell_{\infty}$, 
\begin{equation}\label{equation: MPtotal}
  x\succeq y \Leftrightarrow \sum_{\delta\in D} \xi_{\sigma}\left(\langle \delta, x\rangle\right)\nu(\delta) \geq \sum_{\delta\in D} \xi_{\sigma}\left(\langle \delta, y\rangle\right)\nu(\delta).
\end{equation}
\end{proposition}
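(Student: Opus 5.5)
We treat each individual as a state of nature. The map $\Phi:\ell_\infty\to\mathbb{R}^N$, $\Phi(x)=(\langle\delta_1,x\rangle,\ldots,\langle\delta_N,x\rangle)$, is linear and continuous. It is also onto. Since the $\delta_i$ are distinct, the vectors $(\delta_i^t)_{i}$, $t=0,\ldots,N-1$, form a Vandermonde system, so finitely supported streams already reach every $f\in\mathbb{R}^N$. By Axiom~\ref{axiom: D-monotonicity}(a), $\Phi(x)=\Phi(y)$ implies $x\sim y$. Hence $\succeq$ induces a weak order $\succeq^*$ on acts $f\in\mathbb{R}^N$.

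Sufficiency, (ii)$\Rightarrow$(i), is a routine check. The criterion depends on $x$ only through $\Phi(x)$. This gives Axiom~\ref{axiom: IUE}, because the representation is additively separable across states. Strict monotonicity follows from $\nu\gg0$ and $\xi_\sigma$ strictly increasing. Convexity(a) holds since each summand is concave and $\Phi$ is linear, and convexity(b) holds because $\xi_\sigma$ is strictly concave for $\sigma<\infty$. For Axiom~\ref{axiom: I0OU}, adding $z^0$ adds the constant $z_0$ to every $\langle\delta,x\rangle$, and $-e^{-(s+z_0)/\sigma}=e^{-z_0/\sigma}\cdot(-e^{-s/\sigma})$, so the ordering is unchanged. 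Continuity is clear.

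For necessity, we first invoke Theorem~\ref{Theorem: VP total}. Through $\Phi$ it makes $\succeq^*$ a variational preference on $\mathbb{R}^N$:
\[
V(f)=\min_{\mu}\Big(\textstyle\sum_i f_i\mu_i+c(\mu)\Big).
\]
Two properties of $V$ carry over. It is translation invariant, $V(f+k\mathbf 1)=V(f)+k$, which is the translation axiom of variational preferences. It is concave, inherited from Axiom~\ref{axiom: Convexity}, and not linear, because $c$ is nondegenerate, or equivalently by convexity(b). Next, Axiom~\ref{axiom: IUE} translated to acts is exactly Savage's sure-thing principle (P2). Take a state set $E$ and acts $f,g$ that agree on $E^c$. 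Changing the common values on $E^c$ to another common value does not change the ranking. With $N\ge3$, strict monotonicity (so every state is nonnull), continuity, and the sure-thing principle, Debreu's additive representation theorem on $\mathbb{R}^N$ applies. It gives $\succeq^*$ the form $\sum_i u_i(f_i)$ with continuous, strictly increasing $u_i$, unique up to a common positive affine transformation.

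The main obstacle is the following step. We must combine additivity with constant-translation invariance to force $u_i=\nu_i\,\xi_\sigma+b_i$ for a common $\sigma$. For each $k$, $f\mapsto\sum_iu_i(f_i+k)$ represents the same order as $\sum_i u_i(f_i)$. By the uniqueness in Debreu's theorem, $u_i(s+k)=a(k)u_i(s)+b_i(k)$ with $a(k)>0$ independent of $i$. These are Pexider/Cauchy-type functional equations; continuity and monotonicity give $a(k)=e^{-\lambda k}$. Then either $\lambda=0$ and every $u_i$ is affine with a common slope up to weights, or $\lambda\neq0$ and $u_i(s)=-\nu_i e^{-\lambda s}+b_i$ (or its mirror image), with the sign of $\nu_i$ fixed by monotonicity. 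The derivation needs care. We fix $k$ first, use that $u_i$ is not constant, and solve the equation for $a$.

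Convexity rules out the unwanted cases. The affine case gives a linear $V$ and contradicts convexity(b). The case $\lambda<0$ gives a convex, strictly risk-seeking form that violates convexity(a). So $\lambda=1/\sigma>0$, and we normalize $\nu$ into $\Delta_{++}(D)$, with $\nu\gg0$ coming from strict monotonicity.
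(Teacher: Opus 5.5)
Your proposal is correct, but it reaches the exponential form by a different route from the paper's.

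\textbf{The paper's route.} The paper first translates Axiom~\ref{axiom: IUE} into the sure-thing principle, using a block-Vandermonde construction of preimages. It then invokes Theorem~\ref{Theorem: VP total} to make $\succeq^*$ a variational preference. Next it cites Theorem~1 of Strzalecki (2011): a variational preference satisfying P2 with at least three nonnull states is a multiplier preference $\min_\mu(\mu\cdot f+\sigma R(\mu\|\nu))$ with $\sigma\in(0,\infty]$. The Dupuis--Ellis variational formula turns this into the certainty equivalent of $\sum_\delta\xi_\sigma(f(\delta))\nu(\delta)$. Finally, strict monotonicity gives $\nu\gg0$ and Axiom~\ref{axiom: Convexity}(b) excludes $\sigma=\infty$.

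\textbf{Your route.} You bypass both the citation and the variational formula. You go directly from P2, continuity and strict monotonicity to Debreu's additive representation. You then combine the uniqueness of additive representations with the diagonal translation invariance from Axiom~\ref{axiom: I0OU}. This yields the Pexider equation $u_i(s+k)=a(k)u_i(s)+b_i(k)$, whose continuous increasing solutions are the linear family and the two exponential families. This is the classical derivation of the Kolm--Pollak form, and essentially the argument packaged inside Strzalecki's theorem.

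\textbf{What each buys.} Your route is self-contained and more transparent.
\begin{itemize}
\item It shows that your appeal to Theorem~\ref{Theorem: VP total} is actually dispensable. Translation invariance, separability, continuity and monotonicity alone force the linear/exponential dichotomy, and the Convexity axiom only selects the concave exponential branch.
\item It makes explicit where $N\ge 3$ enters, namely in Debreu's theorem.
\item You also supply the sufficiency direction, including the identity $\xi_\sigma(s+z_0)=e^{-z_0/\sigma}\xi_\sigma(s)$ for Axiom~\ref{axiom: I0OU}, which the paper omits.
\end{itemize}
The paper's route is much shorter. It builds concavity into the representation from the outset, so the convex exponential branch never has to be ruled out separately.

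\textbf{Points left implicit.} These are all routine:
\begin{itemize}
\item Continuity of $\succeq^*$ on $\mathbb{R}^N$ follows from the continuous linear right inverse $f\mapsto(\mathbf{M}^{-1}f,0,0,\ldots)$ of $\Phi$, or from the Lipschitz property of the niveloid.
\item The non-constancy of each $u_i$ gives both the continuity of $a$ and $a(k_1+k_2)=a(k_1)a(k_2)$.
\end{itemize}
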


The preference on the right hand side of \eqref{equation: MPtotal} has an equivalent formulation, and is called the multiplier preference \citep{Hansen2001, Strzalecki2011}. Specifically, let $\Delta_{\nu}(D)$ be the set of probability measures in $\Delta(D)$ that are absolutely continuous with respect to $\nu$; define the relative entropy
$$R(\mu||\nu)=
\begin{cases}
\sum_{\delta\in D} \mu(\delta)\ln\frac{\mu(\delta)}{\nu(\delta)}, & \mbox{if } \mu\in  \Delta_{\nu}(D),\\
  \infty, & \mbox{otherwise},
\end{cases}
$$
Then the multiplier preference is a special case of \eqref{equation: VP  total} with $c(\mu)=\sigma R(\mu||\nu)$ for some $\sigma>0$.

Since $\xi_{\sigma}$ is concave, the DM with a multiplier preference is averse to the dispersion of individuals' evaluations of a stream (we shall discuss this point in length in the next section). But there is no reason to insist on a special functional form like \eqref{equation: exponential}: Greater flexibility in functional form is occasionally advantageous. In the following, we characterize a preference with more flexible attitude toward the dispersion of individuals' evaluations.

To this end, we introduce another invariance axiom. To motivate, we notice that Axiom~\ref{axiom: ICOU} has a limitation. Specifically, normalize the subsistence utility level to zero. The DM may prefer $(0,0,0,\ldots)$ to $(-10,10,10,\ldots)$ because generation $0$ falls below subsistence in the latter stream. That is, the DM chooses to sacrifice the future generations' welfare to increase that of generation $0$ when the latter is far below subsistence. However, she may not do so when  generation $0$ is quite rich. For instance,  if the welfare of generation $0$ is increased by $20$, the DM will favor $(10, 10, 10,\ldots)$ over $(20, 0, 0,\ldots)$. To accommodate this pair of preferences, we weaken Axiom~\ref{axiom: I0OU} in the following way.
\begin{axiom}[Invariance of preference-increasing direction with respect to generation 0's origin]\label{axiom: Translation Invariance 0}
If $\alpha x+z^0\succeq z^0$ for some $\alpha>0$ and some $z^0\in \ell_{\infty}^0$,  there exists for any other stream $z^{0'}\in \ell_{\infty}^0$ an $\alpha'>0$ such that $\alpha' x+z^{0'}\succeq z^{0'}$.
\end{axiom}

Recall that Axiom~\ref{axiom: I0OU} demands invariance of the DM's preference with respect to generation 0's origin. Axiom~\ref{axiom: Translation Invariance 0} can be understood as demanding invariance of the DM's preference-change with respect to generation 0's origin.  Formally, call $x$ a preference-increasing direction at $z^0$ if $\alpha x+z^0\succeq z^0$ for some $\alpha>0$. Then the axiom stipulates that if $x$ is a preference-increasing direction at some $z^0\in \ell_{\infty}^0$, it remains so when the origin of generation $0$ is shifted. 

Let us examine how Axiom~\ref{axiom: Translation Invariance 0} can accommodate the previous example. Specifically, let $x=(-10,10,10,\ldots)$, $z^0=(20, 0, 0, \ldots)$, and $\alpha=1$. Then we have $\alpha x+z^0\succeq z^0.$ However, if we shift $z^0$ to $z^{0'}=(0, 0, 0, \ldots)$, Axiom~\ref{axiom: Translation Invariance 0} does not imply $x\succeq 0$; rather, it only  entails $\alpha' x\succeq 0$ for some $\alpha'>0$. To see how this condition may hold, take $\alpha'=0.1$, so that $\alpha' x=(-1,1,1,\ldots)$. The DM prefers $0$ to $x$ because generation $0$ falls substantially below subsistence in $x$ and might not survive. In the scaled stream $\alpha' x$, generation $0$ still falls below subsistence, but to a much lesser extent than in $x$, and therefore can very much likely survive the minor hardship. Given this, the DM may be willing to trade off generation $0$'s minor hardship for a better life of the subsequent generations, and thus favors $\alpha' x$ over $0$.

With Axiom~\ref{axiom: Translation Invariance 0}, we obtain a generalization of Proposition~\ref{Proposition: exponential}.
\begin{theorem}\label{Theorem: SOEU total}
Assume $N\geq 3$. The following conditions are equivalent:\\
(\rmnum{1}) The relation $\succeq$ satisfies Axioms~\ref{axiom: D-monotonicity}, \ref{axiom: Continuity}, \ref{axiom: Convexity},  \ref{axiom: IUE}, and \ref{axiom: Translation Invariance 0}.\\
(\rmnum{2}) There exists a strictly concave, strictly increasing, and differentiable function $u$ and a probability measure $\mu\in \Delta_{++}(D)$ such that for all $x, y\in \ell_{\infty}$,
\begin{equation}\label{equation: SOEU  total}
  x\succeq y \Leftrightarrow \sum_{\delta\in D} u\left(\langle \delta, x\rangle\right)\mu(\delta) \geq \sum_{\delta\in D} u\left(\langle \delta, y\rangle\right)\mu(\delta).
\end{equation}
\end{theorem}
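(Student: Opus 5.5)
The plan is to reduce the problem to choice under uncertainty with the individuals as states, and then to read off the representation from additive separability together with a local condition at constant acts.

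\emph{Step 1: reduction to acts.} Define the linear map $\Phi:\ell_{\infty}\to\mathbb{R}^N$ by $\Phi(x)=(\langle\delta_1,x\rangle,\ldots,\langle\delta_N,x\rangle)$.
\begin{itemize}
\item $\Phi$ is onto. Restricting to streams supported on $\{0,\ldots,N-1\}$ gives a Vandermonde matrix in the distinct $\delta_i$, which is invertible.
\item By Axiom~\ref{axiom: D-monotonicity}(a), $\Phi(x)=\Phi(y)$ implies $x\sim y$. Hence there is a well-defined weak order $\succeq^*$ on $\mathbb{R}^N$ with $x\succeq y$ iff $\Phi(x)\succeq^*\Phi(y)$.
\item Continuity of $\succeq^*$ follows from Axiom~\ref{axiom: Continuity}, because $\Phi$ is open, bounded and linear, so preimages of closed sets are closed.
\item Full Axiom~\ref{axiom: D-monotonicity} makes $\succeq^*$ strictly monotone.
\item Axiom~\ref{axiom: Convexity} makes $\succeq^*$ convex and not everywhere linear.
\item For $z^0\in\ell_{\infty}^0$ we have $\Phi(z^0)=(z_0,\ldots,z_0)$, the constant act. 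So Axiom~\ref{axiom: Translation Invariance 0} says: the cone $K(c)=\{h:\alpha h+c\succeq^* c \text{ for some }\alpha>0\}$ does not depend on the constant act $c$.
\item Axiom~\ref{axiom: IUE} translates exactly into Savage's P2 (sure-thing separability) for $\succeq^*$ on every event $E\subset D$.
\end{itemize}

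\emph{Step 2: additive separability.} Since $N\ge3$ and every coordinate is essential by strict monotonicity, Debreu's theorem applies to the continuous, separable, monotone order $\succeq^*$. It gives $V(f)=\sum_i u_i(f_i)$ with each $u_i$ continuous and strictly increasing, unique up to a common positive scaling and separate additive constants. Convexity of $\succeq^*$ makes $V$ quasi-concave. A standard argument (Debreu--Koopmans: an additively separable quasi-concave function with at least two essential coordinates has at most one non-concave summand, and here symmetry of the argument over any pair of states rules this out) then allows each $u_i$ to be chosen concave.

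\emph{Step 3: equal utilities across states.} At a constant act $c$, concavity gives that $K(c)$ consists of the directions $h$ with
\[
\sum_i \bigl(u_i'^{+}(c)\,h_i^+ - u_i'^{-}(c)\,h_i^-\bigr)\ge0,
\]
up to boundary terms. Each $u_i$ is differentiable at almost every $c$, and at such points $K(c)$ is a half-space with normal $(u_i'(c))_i$.
\begin{itemize}
\item Invariance of $K(\cdot)$ then forces $u_i'(c)=\lambda(c)\,\mu_i$ for a fixed $\mu\in\Delta_{++}(D)$ and a positive function $\lambda(c)$.
\item At a kink of some $u_i$, $K(c)$ would be a strictly smaller, non-half-space cone. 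It could not coincide with the half-space obtained at nearby differentiability points, so there are no kinks and each $u_i$ is differentiable everywhere.
\item Integrating gives $u_i=\mu_i u+\text{const}$ with $u'=\lambda$. This yields \eqref{equation: SOEU total}.
\end{itemize}

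\emph{Step 4: strict concavity, and the main obstacle.} The hard part is obtaining strict concavity of $u$, not just concavity plus nonlinearity. Nonlinearity already follows from Axiom~\ref{axiom: Convexity}(b); the issue is whether $u$ can be affine on some interval $[a,b]$.
\begin{itemize}
\item I would first try to show this is impossible. The idea is to build two indifferent streams whose evaluation profiles lie in $[a,b]^N$, in a configuration where $\sum_i\mu_i u(f_i)$ is affine.
\item If this fails, I would combine Axiom~\ref{axiom: Translation Invariance 0} with the cone structure: at a constant $c$ strictly inside $(a,b)$, the cone $K(c)$ equals its closure. The goal is to get a contradiction with its form at a point where $u$ is strictly concave.
\item I am least sure of this step. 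The rest of the plan is routine.
\end{itemize}

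\emph{Step 5: converse.} Sufficiency is a direct check of each axiom.
\begin{itemize}
\item Axiom~\ref{axiom: Translation Invariance 0} holds because, for differentiable concave $u$, $\alpha x+z^0\succeq z^0$ for small $\alpha$ iff $\sum_i\mu_i\langle\delta_i,x\rangle>0$, or the sum equals zero with $x$ in the appropriate boundary case. This condition does not depend on $z_0$.
\item Axiom~\ref{axiom: IUE} follows from additive separability.
\item Axiom~\ref{axiom: Convexity}(b) follows from strict concavity.
\end{itemize}
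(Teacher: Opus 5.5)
Your Step~4 is a genuine gap. Strict concavity of $u$ is part of (ii), and it is not cosmetic: your own Step~5 needs it to handle the boundary case $\mu\cdot\Phi(x)=0$. Yet you only list tentative ideas. The first idea cannot work by itself. An affine stretch of $u$ on $[a,b]$ only produces indifferences inside $[a,b]^N$, which Axiom~\ref{axiom: Convexity}(a) tolerates, and Axiom~\ref{axiom: Convexity}(b) only asks for a strict gain from mixing somewhere. The contradiction has to come from Axiom~\ref{axiom: Translation Invariance 0}. Your second idea points that way, but two ingredients are missing.

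First, you need a point $\theta^*$ such that $u$ is affine on no open interval containing $\theta^*$. Axiom~\ref{axiom: Convexity}(b) only says $u$ is not globally affine, so existence needs an argument: if every point had such an interval, $u$ would be locally affine and hence affine on the connected set $\mathbb{R}$. The paper argues this via Darboux's theorem.

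Second, take $h$ with $h_1=\mu_2$, $h_2=-\mu_1$ and $h_i=0$ otherwise, so $\mu\cdot h=0$. With $\varpi=(a+b)/2$ and $\alpha$ small, $\sum_i\mu_iu(\varpi+\alpha h_i)=u(\varpi)$, so $h\in K(\varpi)$. Axiom~\ref{axiom: Translation Invariance 0} then yields $\alpha'>0$ with $\sum_i\mu_iu(\theta^*+\alpha'h_i)\ge u(\theta^*)$. Jensen gives the reverse inequality, and equality in Jensen with $\mu\gg0$ forces $u$ to be affine on $[\theta^*-\alpha'\mu_1,\theta^*+\alpha'\mu_2]$. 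That interval has $\theta^*$ in its interior, a contradiction. This is exactly the paper's argument.

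A second, smaller problem is the concavity claim in Step~2, on which Step~3 depends. You use concavity of each $u_i$ for the bound $V(c\mathbf{1}+\alpha h)-V(c\mathbf{1})\le\alpha\sum_i(u_i'^{+}(c)h_i^+-u_i'^{-}(c)h_i^-)$ and for positivity of the one-sided derivatives. Debreu--Koopmans genuinely permits one non-concave summand: $\ln x+y^2$ is quasi-concave on $(0,\infty)\times(1,2)$. No symmetry between pairs of states excludes it.

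What excludes it here is that every coordinate ranges over all of $\mathbb{R}$. Take a concave summand $u_1$; then $w=u_1^{-1}$ is convex with $w(s)\to-\infty$ as $s\to-\infty$. A strict failure of concavity of some $u_j$, fed into convexity of upper contour sets at very low levels, forces $w'$ to decay geometrically as $s\to-\infty$, so $w$ would be bounded below.

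You must either supply such an argument or follow the paper. The paper first obtains a single $\mu\in\Delta_{++}(D)$ supporting every upper contour set at a constant act (separating hyperplane plus Proposition~8 of Rigotti et al.). It then reads off $\sum_i\mu_iu(f_i)$ with concave $u$ from the proof of Theorem~2 of Grant et al. Granting concavity, your direct analysis of the cones $K(c)$ in Step~3 is a sound and more self-contained substitute for the paper's appeal to Rigotti et al.'s Lemma~1 and Proposition~8.
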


\section{Application}\label{section:Application}
In this section, we apply the aggregation rules in the preceding sections to explain the experimental data of \cite{Jackson2014}. We first briefly describe their experimental setup and then present our result.

In \citeauthor{Jackson2014}'s experiment, subjects make a series of decisions over two consumption streams. They act as social planners whose choices determine the payoff profile for a group of three other individuals. For instance, one of the decisions is to choose between the streams $C=(105, 0, 0)$ and $C'=(0, 160, 0)$. In each group, two members are assigned a discount factor explicitly and are assumed to conform with exponentially discounted utility theory. The third member, however, is not assigned a discount factor, his payoff being given directly and being the same across the two choices. Then if member $1$ is assigned a discount factor of $0.2$, member $2$ a discount factor of $0.9$, and member $3$ the payoff of $80$, the two streams $C$ and $C'$ lead, respectively, to two payoff vectors $(105, 105, 80)$ and $(32, 144, 80)$. The subjects are asked to choose between such payoff vectors.

Totally, $60$ subjects participate the experiment, $36$ of them completing $14$ choices and $24$ completing $38$ choices.  As noted above, in making the choices, the subjects care not only the sum but also the distribution of payoffs. Now let us examine how well our aggregation rules fit their experimental data. Since the  third member in each group is not assigned a discount factor, we cannot apply the rules for aggregating average discounted utilities. To apply the rules for aggregating total discounted utilities, we replace the term $\langle \delta, x\rangle$ in \eqref{equation: VP total} and \eqref{equation: SOEU  total} with the individuals' payoffs. To illustrate, consider \eqref{equation: VP total}. The utility level of the payoff vector $(105, 105, 80)$ is given by
$$\min_{\mu\in \Delta(D)}\left\{(105, 105, 80) \cdot \mu+c(\mu)\right\},$$
where the symbol, $\cdot$, stands for the inner product of two vectors.

We compare four models. The first model is a special case of \eqref{equation: VP total}.  Define
\begin{equation}\label{equation: G}
G(\mu||\nu)=
\begin{cases}
 \frac{1}{2} \sum_{\delta\in D} \left(\frac{\mu(\delta)}{\nu(\delta)}-1\right)^2\nu(\delta), & \mbox{if } \mu\in  \Delta_{\nu}(D),\\
  \infty, & \mbox{otherwise}.
\end{cases}
\end{equation}
By taking $c$ in \eqref{equation: VP total} to be $\sigma G$, we get the monotone mean-variance preference \citep{MacCheroni2009}, written $\succeq^{mv}$. That is, $x \succeq^{mv} y$ iff $U_{mv}(x)\geq U_{mv}(y)$, where
\begin{equation}\label{equation: mean-variance}
U_{mv}=\min_{\mu\in \Delta(D)}\left(\sum_{\delta\in D}\langle \delta, x\rangle\mu(\delta)+\sigma G(\mu||\nu)\right), \sigma\geq 0.
\end{equation}
We take the first model to be this preference $\succeq^{mv}$, and the second one to be the multiplier-preference \eqref{equation: MPtotal}. 

The third model is a special case of \eqref{equation: multi-utilitarian}. In practice, it is not clear how to specify the set $\Sigma$. But, according to \citet[][Theorem~3.7]{Hinojosa2008}, the multi-utilitarian preference includes as a special case the single-parameter generalization of the Gini utility function \citep{Donaldson1980}:
\begin{equation}\label{equation: generalized Gini}
  U_G^\sigma(x)=\frac{\sum_{i=1}^{N} [i^\sigma-(i-1)^\sigma]\tilde{d}_i}{N^\sigma},
\end{equation}
where $\sigma\geq 1$ and $(\tilde{d}_i)_{i=1}^N$ is a permutation of $(\langle \delta_i, x\rangle)_{i=1}^N$ such that $\tilde{d}_1\geq \tilde{d}_2\geq\cdots\geq \tilde{d}_N$. Note that $U_G^1$ is simply the utilitarian rule,  $U_G^2$ corresponds to the Gini index, and $U_G^\infty(x)=\min_i \langle \delta_i, x\rangle$. We take model $3$ to be \eqref{equation: generalized Gini} and refer to it as the Gini-model. The last one is an instance of \eqref{equation: SOEU  total} in which
\begin{equation}\label{equation: CRIA}
u_\sigma(t)=\begin{cases}
     \frac{t^{1-\sigma}}{1-\sigma}, & \mbox{if } \sigma\geq 0, \sigma\neq 1, \\
       \log(t), & \mbox{if } \sigma= 1.
     \end{cases}
\end{equation}
By \cite{Atkinson1970}, this model has a constant relative inequality aversion (CRIA), and so we shall call it CRIA-model. In the multiplier-preference and CRIA models, we take $\nu$ to be the uniform distribution, i.e. $\nu=(1/3, 1/3, 1/3)$. 

To assess the models, we follow the method of \cite{Jackson2014}. To be specific, take the mean-variance model for example. For each subject, the score of the model with a given value of $\sigma$ is the proportion of the decisions that are predicted by the model and coincide with the subject's observed choices. We find the value of $\sigma$ that produces the highest score for each subject, and refer to that score as the individual score of the mean-variance model for the subject. Then the overall score of the model is the average over all individual scores. The overall scores of the four models are presented in Table~\ref{table: score}.
\begin{table}[htbp]
  \centering
   \caption{Overall Scores of the Four Models}
    \setlength{\tabcolsep}{15pt}
 \begin{tabular}{ccccc}
\hline\hline
 Model & Mean-variance & Multiplier-preference& Gini & CRIA\\
\hline
Overall score &0.80  & 0.82 & 0.83&0.81 \\
\hline
\end{tabular}
\label{table: score}
\end{table}

Table~\ref{table: score} indicates that there is little difference in performance among the four models, and their performance is also comparable to the mean-standard deviation model of \cite{Jackson2014} (whose overall score is $0.82$). Given that the Gini model slightly outperforms the other three alternatives, we adopt it as a framework to further examine the data. By  \citet{Donaldson1980}, from the function $U_G^\sigma$ we can define two measures of inequality:
\begin{align*}\label{equation: inequality measure}
  I_R^\sigma(x) & =1-\frac{U_G^\sigma(x)}{\Gamma(x)}, \\
  I_A^\sigma(x) & =\Gamma(x)-U_G^\sigma(x),
\end{align*}
where $\Gamma(x)=\frac{1}{N}\sum_{i=1}^{N}\langle \delta_i, x\rangle$. $I_R^\sigma$ is called a relative measure of inequality and $I_A^\sigma$ an absolute measures of inequality. For each subject $i$, we find the value of $\sigma$, denoted $\sigma_i$, that is associated with his individual score of the Gini model. Using $\sigma_i$, we compute the subject's average aversion to inequality:
$$\bar{I}_R^i=\sum_{x\in \Omega_i} I_R^{\sigma_i}(x), \bar{I}_A^i=\sum_{x\in \Omega_i} I_A^{\sigma_i}(x),$$
where $\Omega_i$ is the set of all payoff vectors involved in subject $i$'s choice problems. The distributions of the subjects' average inequality aversion are displayed in Figure~\ref{fig:distribution of IA}. 
\begin{figure}[H]
  \centering
  \includegraphics[bb=25 3 557 317, scale=0.6]{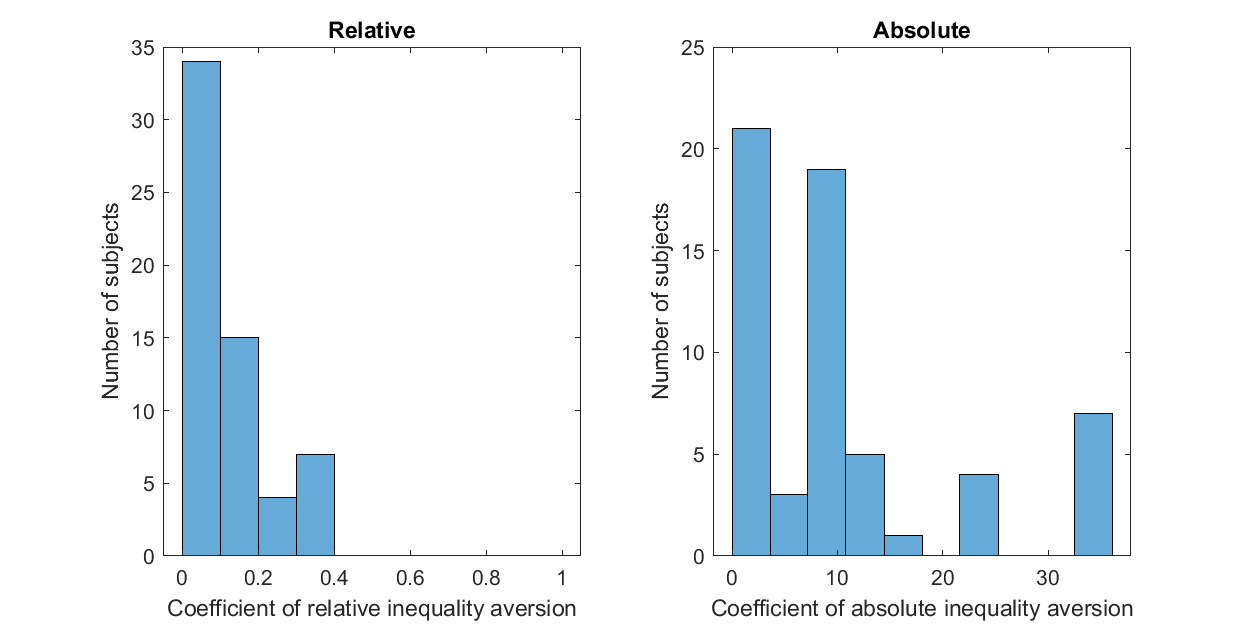}
  \caption{Distributions of the Subjects' Relative and Absolute Inequality Aversion}
  \label{fig:distribution of IA}
\end{figure}
From the left panel we can see that $34$ of the $60$ subjects have a coefficient of relative inequality aversion below $0.1$ and $49$ have a coefficient less than $0.2$. The right panel indicates that $43$ of the $60$ subjects have a coefficient of absolute inequality aversion below $10$. This result corroborates the finding of \cite{Jackson2014} that most subjects in their experiment exhibit inequality aversion but only to a limited degree.

\section{Conclusion}\label{section:Conclusion}
People exhibit substantial heterogeneity in their time preferences, which can generate considerable inequality in their evaluations of the same utility stream. Experimental evidence indicates that a DM is averse to such inequality, but only mildly so. This phenomenon cannot be explained by the two most commonly used aggregation rules: the utilitarian and the maxmin. In this paper, we propose new aggregation rules that allow the DM to display varying degrees of inequality aversion. The new axiom underlying these rules is the invariance of the DM's preference with respect to the origin and scale of the present generation's utility. The predictions of these rules are in good agreement with experimental data.

For future research, note that we have assumed throughout this paper that all individuals are exponential discounters. This assumption is incompatible with many experimental and empirical findings \citep{Frederick2002}. To what extent can it be relaxed? Along this line of inquiry, \cite{Dong-xuan2025} study the aggregation of quasi-hyperbolic individuals. Although quasi-hyperbolic discounting has been quite successful in explaining many important economic phenomena \citep{Laibson}, it falls short of accounting for many other findings---for instance, that present bias becomes much weaker when uncertainty is introduced into the immediate reward \citep{Halevy2008}. This suggests the need to study the aggregation of more general time preferences, such as time-separable preferences.

\appendix
\begin{appendices}
\numberwithin{equation}{section}
\section{Proofs}
The proofs of Theorems~\ref{Theorem: VP} and \ref{Theorem: VP total} are quite similar. Since our main focus in this paper is on the case where the individuals' time preferences are represented by total discounted utility functions, we present the proof of Theorem~\ref{Theorem: VP total} in detail and indicate only how to adapt this argument to arrive at a proof of Theorems~\ref{Theorem: VP}.

\subsection{Proof of Theorem~\ref{Theorem: VP total}}\label{section: Proof of Theorem VP total}

The main idea in the proof of the theorem is to apply the relevant representation result in decision making under uncertainty. To this end, we transform each stream $x\in \ell_{\infty}$ into an act defined on $D$.  Let $\mathfrak{F}$ be the set of all functions from $D$ to $\mathbb{R}$. We refer to the functions in $\mathfrak{F}$ as acts, and identify $a\in \mathbb{R}$ with a constant act $f(\delta)=a$ for all $\delta\in D$. Endow $\mathfrak{F}$ with the Euclidean norm. 

Define a mapping $\phi: \ell_{\infty} \rightarrow \mathbb{R}^N$ such that $\phi(x)=(\langle \delta_i, x\rangle)_{i=1}^N$.
\begin{lemma}\label{lemma: image}
(\rmnum{1}): $\phi$ is continuous and linear.\\
(\rmnum{2}):  $\phi(x_0, 0, 0, \ldots)=x_0$ for any $x_0\in \mathbb{R}$.\\
(\rmnum{3}): $\mathfrak{F}=\phi(\ell_{\infty})$.
\end{lemma}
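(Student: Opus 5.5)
Part (i) is handled directly. Linearity of $\phi$ is immediate, since each coordinate $x\mapsto\langle\delta_i,x\rangle=\sum_t\delta_i^t x_t$ is linear. For continuity, I bound each coordinate:
$$|\langle\delta_i,x\rangle|\le \sum_t\delta_i^t|x_t|\le \frac{\|x\|_\infty}{1-\delta_i}.$$
Hence
$$\|\phi(x)\|\le \sqrt N\,\max_i\frac{1}{1-\delta_i}\,\|x\|_\infty,$$
so $\phi$ is a bounded linear map and therefore continuous. Part (ii) is a one-line check: for $x=(x_0,0,0,\ldots)$ every coordinate equals $\delta_i^0x_0=x_0$, so $\phi(x)$ is the constant act $x_0$.

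For part (iii) I need surjectivity onto $\mathbb{R}^N$. I will use only streams supported on the first $N$ generations, $x=(x_0,\ldots,x_{N-1},0,0,\ldots)\in\ell_\infty$. For such streams $\phi(x)=Vx$, where
$$V=(\delta_i^{t})_{i=1,\ldots,N;\;t=0,\ldots,N-1}$$
is a square Vandermonde matrix. Its determinant is $\prod_{i<j}(\delta_j-\delta_i)$, which is nonzero because the discount factors are distinct by assumption. So $V$ is invertible, and for any act $f\in\mathfrak F$ the stream $x=V^{-1}f$ padded with zeros is bounded and satisfies $\phi(x)=f$. This gives $\mathfrak F\subseteq\phi(\ell_\infty)$; the reverse inclusion is trivial, since $\phi$ maps into $\mathbb{R}^N=\mathfrak F$.

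The only step with content is the surjectivity in (iii). Its key input is the distinctness $\delta_i\neq\delta_j$ assumed in the setup, which makes the Vandermonde matrix nonsingular. Everything else is routine.
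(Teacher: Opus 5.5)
Your proof is correct and follows the paper's argument: parts (i) and (ii) by direct verification (you also supply the explicit bound $\|\phi(x)\|\le\sqrt{N}\max_i(1-\delta_i)^{-1}\|x\|_\infty$, which the paper leaves implicit), and part (iii) by inverting the nonsingular Vandermonde system and padding the solution with zeros. Your indexing of the stream as $(x_0,\ldots,x_{N-1},0,0,\ldots)$ is slightly cleaner than the paper's, which labels the unknowns $x_1,\ldots,x_N$ even though generations start at $0$.
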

\begin{proof}
Parts (\rmnum{1}) and (\rmnum{2}) can be verified directly from the definition of $\phi$. For part (\rmnum{3}), because $D\subset (0,1)$, we have $\phi(\ell_{\infty}) \subseteq \mathfrak{F}$. To show the reverse inclusion, take any $f=(f_1, \ldots, f_N)\in \mathfrak{F}$. It suffices to demonstrate $f=\phi(x)$ for some $x\in \ell_{\infty}$. To this end, consider the linear system
\begin{equation}\label{eq: determine x from f}
\begin{bmatrix}
  1 & \delta_1 & \cdots &  \delta_1^{N-1} \\
   1 & \delta_2 & \cdots &  \delta_2^{N-1} \\
  \vdots &   \vdots & \ddots &   \vdots \\
  1 & \delta_N & \cdots &  \delta_N^{N-1} 
\end{bmatrix}
\begin{bmatrix}
  x_1 \\
  x_2 \\
  \vdots \\
  x_N 
\end{bmatrix}=
\begin{bmatrix}
f_1 \\
f_2\\
    \vdots \\
f_N
\end{bmatrix}.
\end{equation}
Since the coefficient matrix of the system is a Vandermonde matrix, which, as $\delta_i\neq \delta_j$ for $i\neq j$, is nonsingular, the system has a unique solution, denoted $(x_1^*, \ldots, x_N^*)$. Then we have $f=\phi(x)$ by letting $x=(x_1^*, \ldots, x_N^*, 0, 0, \ldots)$.
\end{proof}

Let $\succeq$ be a preference relation on $\ell_{\infty}$ which satisfies Axiom~\ref{axiom: D-monotonicity}(a). With this axiom, $\phi(x)=\phi(y)$ implies $x\sim y$. Therefore, the binary relation $\succeq^*$ on $\mathfrak{F}$ defined by
$$f \succeq^* g\text{ if }x\succeq y, \text{ where }x\in \phi^{-1}(f) \text{ and }y\in \phi^{-1}(g), $$ 
is a well-defined weak order. In what follows, we translate properties of $\succeq$ into corresponding properties of $\succeq^*$, and then apply the relevant results from decision theory under uncertainty to obtain a representation of $\succeq^*$, which in turn delivers the desired representation of $\succeq$.

Now let us prove the theorem. For the implication (\rmnum{2})$\Rightarrow$ (\rmnum{1}), the verification of Axioms~\ref{axiom: D-monotonicity}(a), \ref{axiom: Continuity}, \ref{axiom: Convexity}(a), and \ref{axiom: I0OU} is straightforward. To verify Axiom~\ref{axiom: Convexity}(b), let $\text{dom}(c)=\{\mu\in \Delta(D): c(\mu)<\infty\}$ and
$$U(f)=\min_{\mu\in \Delta(D)}\left(\sum_{\delta\in D} f(\delta)\mu(\delta)+c(\mu)\right).$$
\begin{lemma}\label{lemma: affine and nondegenerate}
$U$ is affine iff $\text{dom}(c)$ is a singleton.
\end{lemma}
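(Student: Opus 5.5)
We prove the two directions separately. Here $c$ is grounded, convex, lower semi-continuous and takes values in $[0,\infty]$, so $U$ is finite on $\mathfrak{F}$ and the minimum is attained. We write $\langle f,\mu\rangle=\sum_{\delta\in D} f(\delta)\mu(\delta)$.

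\emph{If $\text{dom}(c)$ is a singleton, then $U$ is affine.} Let $\text{dom}(c)=\{\mu^*\}$. Groundedness gives $\min_\mu c(\mu)=0$, and this minimum must be attained at the only point where $c$ is finite, so $c(\mu^*)=0$. Every other $\mu$ has $c(\mu)=\infty$ and is irrelevant to the minimum. Hence $U(f)=\langle f,\mu^*\rangle$, which is linear and in particular affine.

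\emph{If $U$ is affine, then $\text{dom}(c)$ is a singleton.} Suppose $\text{dom}(c)$ contains two distinct measures $\mu_1\neq\mu_2$; we derive a contradiction.

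First we pin down the affine form of $U$. Write $U(f)=\langle f,p\rangle+b$. For constant acts, $U(a)=a+\min_\mu c(\mu)=a$ because $c$ is grounded. Hence $b=0$ and $\sum_\delta p(\delta)=1$.

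Next we show $p\geq 0$ and that $p$ is a minimizer. Since $c\ge0$ and $\langle f,\cdot\rangle$ is monotone in $f$, $U$ is monotone, so every coordinate of $p$ is nonnegative. Thus $p\in\Delta(D)$. For each $f$, choose a minimizer $\mu_f$, so that $\langle f,p\rangle=\langle f,\mu_f\rangle+c(\mu_f)$. Taking $f=0$ and using $U(0)=0$ gives $c(\mu_0)=0$.

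The key step exploits the two distinct points. For any $\mu\in\text{dom}(c)$ and any $f$, we have $\langle f,p\rangle\le\langle f,\mu\rangle+c(\mu)$. Replace $f$ by $\lambda f$ with $\lambda>0$ and divide by $\lambda$:
\[
\langle f,p\rangle\le \langle f,\mu\rangle+\frac{c(\mu)}{\lambda}.
\]
Letting $\lambda\to\infty$ gives $\langle f,p-\mu\rangle\le0$ for every $f\in\mathfrak{F}$. Taking $f=\pm(p-\mu)$ then forces $p=\mu$.

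So every element of $\text{dom}(c)$ equals $p$, which contradicts $\mu_1\neq\mu_2$. This finishes the proof.

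The step that needs the most care is the scaling argument. It works only because $c(\mu)$ is finite, so that $c(\mu)/\lambda\to0$, and because the affine form, including $b=0$, holds for all of $\mathfrak{F}$; we need it along the whole ray $\lambda f$, not just locally.
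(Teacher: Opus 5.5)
Your proof is correct and uses the same idea as the paper's. Along a ray $t\mapsto tf$, $U$ is bounded above by the line $t\langle f,\mu\rangle+c(\mu)$ for each $\mu\in\text{dom}(c)$, and an affine function bounded above by such a line on all of $\mathbb{R}$ must have the same slope. The paper applies this to a single $f$ separating $\mu_1$ and $\mu_2$ and leaves the $t\to\pm\infty$ step implicit; you make that limit explicit and conclude that every $\mu\in\text{dom}(c)$ equals the gradient $p$.

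Your steps establishing $b=0$, $p\in\Delta(D)$ and $c(\mu_0)=0$ are harmless but unnecessary, since dividing by $\lambda$ eliminates $b$ anyway.
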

\begin{proof}
If $\text{dom}(c)$ is a singleton, let $\text{dom}(c)=\{\mu\}$. Then $U(f)=\sum_{\delta\in D} f(\delta)\mu(\delta)+c(\mu)$, so that $U$ is affine.

Conversely, suppose that $U$ is affine. Assume by way of contradiction that  $\text{dom}(c)$ contains two distinct elements $\mu_1$ and $\mu_2$. Then there exists an $f\in \mathfrak{F}$ such that $f\cdot \mu_1\neq f\cdot \mu_2$. For any $t\in \mathbb{R}$, we have
\begin{align*}
 \psi(t):=U(tf)&=\min_{\mu\in \Delta(D)}\left(t\sum_{\delta\in D} f(\delta)\mu(\delta)+c(\mu)\right) \\
   & \leq  (f\cdot \mu_i)t+c(\mu_i), i=1, 2.
\end{align*}
Since $f\cdot \mu_1\neq f\cdot \mu_2$, it follows that $ \psi$ and hence $U$ cannot be affine.
\end{proof}

Now as $c$ is nondegenerate, $\text{dom}(c)$ is not a singleton and, by Lemma~\ref{lemma: affine and nondegenerate}, $U$ is not affine. Since $U$ is concave, this implies there exist two acts $f$ and $g$ such that $U(h)>\alpha U(f)+(1-\alpha) U(g)$, where $h=\alpha f+(1-\alpha)g$. Let $\hat{f}=f-U(f)\mathbf{1}$, $\hat{g}=g-U(g)\mathbf{1}$, and $\hat{h}=\alpha\hat{f}+(1-\alpha) \hat{g}$, where $\mathbf{1}=(1, 1, \ldots, 1)\in \mathbb{R}^N$, so that $U(\hat{f})=U(\hat{g})=0$ and $U(\hat{h})>0$. This establishes Axiom~\ref{axiom: Convexity}(b).

We turn to the proof of the reverse implication (\rmnum{1})$\Rightarrow$ (\rmnum{2}).
\begin{lemma}\label{lemma: VP*}
 There exists a grounded, convex, and lower semi-continuous function $c: \Delta(D)\rightarrow [0,\infty]$ such that for all $f, g\in \mathfrak{F}$,
\begin{equation}\label{equation: VP*}
  f\succeq^*g \Leftrightarrow \min_{\mu\in \Delta(D)}\left(\sum_{\delta\in D} f(\delta)d\mu(\delta)+c(\mu)\right) \geq \min_{\mu\in \Delta(D)}\left(\sum_{\delta\in D} g(\delta)d\mu(\delta)+c(\mu)\right).
\end{equation}
\end{lemma}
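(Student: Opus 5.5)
The plan is to verify that $\succeq^*$ on $\mathfrak{F}=\mathbb{R}^N$ satisfies the axioms of the variational preference representation of \cite{Maccheroni2006}, written for monetary acts (Savage acts with real-valued consequences). The goal is a representation $I(f)=\min_\mu(\mu\cdot f+c(\mu))$ with $I$ monotone, concave, and constant-additive. I will derive these properties of $\succeq^*$ from Axioms~\ref{axiom: D-monotonicity}(a), \ref{axiom: Continuity}, \ref{axiom: Convexity}(a) and \ref{axiom: I0OU}, using Lemma~\ref{lemma: image}.

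\emph{Step 1: transfer the axioms.}
\begin{enumerate}
\item Monotonicity: if $f\geq g$ pointwise, pick $x\in\phi^{-1}(f)$ and $y\in\phi^{-1}(g)$. Then $\langle\delta,x\rangle\geq\langle\delta,y\rangle$ for all $\delta\in D$, so $x\succeq y$ by Axiom~\ref{axiom: D-monotonicity}(a), and hence $f\succeq^* g$.
\item Translation invariance by constants: for $a\in\mathbb{R}$, Lemma~\ref{lemma: image}(ii) gives $\phi(a,0,0,\ldots)=a\mathbf{1}$. By linearity, $f+a\mathbf{1}=\phi(x+z^0)$ with $z^0=(a,0,\ldots)$. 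Axiom~\ref{axiom: I0OU} then gives $f\succeq^* g$ iff $f+a\mathbf{1}\succeq^* g+a\mathbf{1}$.
\item Convexity: if $f\sim^* g$, then $x\sim y$ and $\phi(\alpha x+(1-\alpha)y)=\alpha f+(1-\alpha)g$. Axiom~\ref{axiom: Convexity}(a) then yields $\alpha f+(1-\alpha)g\succeq^* f$.
\item Continuity: I will use the Vandermonde construction in Lemma~\ref{lemma: image}(iii) to get a continuous linear right inverse $\psi:\mathbb{R}^N\to\ell_\infty$, $f\mapsto(x_1^*,\ldots,x_N^*,0,\ldots)$. Then $\{g: g\succeq^* f\}=\psi^{-1}(\{z: z\succeq \psi(f)\})$ is closed by Axiom~\ref{axiom: Continuity}, and similarly for the lower contour set.
\end{enumerate}

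\emph{Step 2: construct $I$.} Monotonicity and continuity, together with the fact that constants are ordered by monotonicity, give each $f$ a unique certainty equivalent $I(f)$ with $f\sim^* I(f)\mathbf{1}$. Uniqueness needs strictness on constants. Weak monotonicity alone does not give this, but translation invariance does: if $a\mathbf{1}\sim^* b\mathbf{1}$ with $a<b$, then $0\sim^*(b-a)\mathbf{1}$, and iterating gives $0\sim^* n(b-a)\mathbf{1}$ for all $n$, hence by monotonicity all constants are indifferent. Monotonicity then forces every act to be indifferent, which contradicts Axiom~\ref{axiom: Convexity}(b) after transfer. Existence of $I(f)$ follows from bracketing $f$ between $\min_i f_i$ and $\max_i f_i$ and applying continuity on the segment of constants.

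The resulting $I$ is normalized, monotone, satisfies $I(f+a)=I(f)+a$, and is therefore Lipschitz. Quasi-concavity follows from convexity. Indeed, if $I(f)\geq I(g)$, translate $g$ to $g'=g+(I(f)-I(g))\mathbf{1}$ so that $g'\sim^* f$. Combining convexity, translation invariance and monotonicity gives $I(\alpha f+(1-\alpha)g)\geq\min\{I(f),I(g)\}$. Quasi-concavity plus constant additivity then yields concavity. This is the standard argument in Maccheroni–Marinacci–Rustichini (their Lemma~25-type), which I would cite rather than redo.

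\emph{Step 3: duality and the main obstacle.} A concave, monotone, constant-additive functional on $\mathbb{R}^N$ admits the dual representation $I(f)=\min_{\mu\in\Delta(D)}(\mu\cdot f+c(\mu))$ with
\[
c(\mu)=\sup_{f}\big(I(f)-\mu\cdot f\big).
\]
This $c$ is convex and lower semicontinuous as a supremum of affine functions, and it is grounded because $I(0)=0$. I would derive this by Fenchel–Moreau on the finite-dimensional space, using monotonicity and constant additivity to show that supergradients lie in $\Delta(D)$. The minimum is attained by compactness of $\Delta(D)$.

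The main obstacle is the careful handling of the certainty-equivalent and duality steps, specifically the strictness on constants noted in Step 2. It is resolved there by combining translation invariance with the nontriviality guaranteed by Axiom~\ref{axiom: Convexity}(b).
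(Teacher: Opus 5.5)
Your proposal is correct and follows essentially the same route as the paper. You transfer monotonicity, continuity (via the Vandermonde right inverse), convexity and translation invariance to $\succeq^*$, and build the certainty-equivalent functional $I$, with uniqueness secured exactly as in the paper by iterating Axiom~\ref{axiom: I0OU} and invoking Axiom~\ref{axiom: Convexity}(b), before obtaining the dual representation from the concave normalized niveloid results of \cite{Maccheroni2006}. One small point: you list only Axiom~\ref{axiom: Convexity}(a) among your hypotheses at the outset, but you rightly also need part (b) for strictness on constants.
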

\begin{proof}
The proof is an adaptation of the proof of Theorem~3 of \cite{Maccheroni2006}. For the sake of completeness we present a sketch. The preference $\succeq^*$ is continuous in the sense that the sets $\{\alpha\in [0,1]: \alpha f+(1-\alpha)g\succeq^* h\}$ and $\{\alpha\in [0,1]: h\succeq^*\alpha f+(1-\alpha)g\}$ are closed for all $f, g, h\in \mathfrak{F}$. We prove the former case only; the latter case can be proved likewise. Suppose $\bar{f}^k:=\alpha_k f+(1-\alpha_k)g\succeq^* h$, $k=1, 2, \ldots$, and $\alpha_k\rightarrow \alpha_0$; we show $\bar{f}^0:=\alpha_0 f+(1-\alpha_0)g\succeq^* h$. Let $\bar{x}^k=(\bar{x}^k_1, \ldots, \bar{x}^k_N, 0, 0, \ldots)$ be a stream in $\ell_{\infty}$ such that $(\bar{x}^k_1, \ldots, \bar{x}^k_N)$ solves \eqref{eq: determine x from f} with $f$ on the right hand side replaced by $\bar{f}^k$, $k=0, 1, 2, \ldots$. Then we have $\bar{f}^k=\phi(\bar{x}^k)$, $k=0, 1, 2, \ldots$, and as the Vandermonde matrix in \eqref{eq: determine x from f} is nonsingular, $\bar{f}^k\rightarrow \bar{f}^0$ implies $\bar{x}^k\rightarrow \bar{x}^0$. Let $z\in \phi^{-1}(h)$; then  $\bar{f}^k\succeq^* h$ implies $\bar{x}^k\succeq z$ for $k=1, 2, \ldots$. By Axiom~\ref{axiom: Continuity}, $\bar{x}^0\succeq z$ and hence $\bar{f}^0\succeq^* h$.

By Axiom~\ref{axiom: D-monotonicity}(a), there exist, for each $f\in \mathfrak{F}$, real numbers $\bar{\theta}$ and $\underline{\theta}$ such that $\bar{\theta}\mathbf{1} \succeq^* f \succeq^* \underline{\theta}\mathbf{1}$. By the continuity of $ \succeq^*$, there exists an $\alpha\in [0,1]$ such that $f \thicksim^* \alpha \bar{\theta}\mathbf{1} +(1-\alpha)\underline{\theta}\mathbf{1}:=\theta_f\mathbf{1}$. We claim that $\theta_f$ is unique. To see this, suppose by way of contradiction that there exists a $\theta'\neq \theta_f$ with $f \thicksim^* \theta'\mathbf{1}$. Therefore, $\theta_f\mathbf{1}\thicksim^* \theta'\mathbf{1}$. Then Axiom~\ref{axiom: I0OU} implies that $(\theta_f-\theta')\mathbf{1}\thicksim^* 0$ and $(\theta'-\theta_f)\mathbf{1}\thicksim^* 0$, hence, by induction, that $a(\theta_f-\theta')\mathbf{1}\thicksim^* 0$ and $a(\theta'-\theta_f)\mathbf{1}\thicksim^* 0$ for any natural number $a$. Assume without loss of generality that $\theta_f>\theta'$. Then for any $f\in \mathfrak{F}$, there exists a sufficiently large natural number $a$ such that $a(\theta_f-\theta')>f(\delta)>a(\theta'-\theta_f)$ for all $\delta\in D$. As a consequence of Axiom~\ref{axiom: D-monotonicity}(a), $f\thicksim^* 0$. But this contradicts \ref{axiom: Convexity}(b). Hence, $\theta_f$ must be unique.

Define the function $I: \mathfrak{F}\rightarrow \mathbb{R}$ by $I(f)=\theta_f$. By Axiom~\ref{axiom: D-monotonicity}(a), $f\geq g$ implies $I(f)\geq I(g)$. For any $f\in \mathfrak{F}$ and $\alpha\in (0,1)$, let $\theta_{\alpha f}=I(\alpha f)$. Then by Axiom~\ref{axiom: I0OU}, $\alpha f +(1-\alpha)\theta\mathbf{1} \thicksim^* (\theta_{\alpha f}+(1-\alpha)\theta)\mathbf{1}$ for any $\theta\in \mathbb{R}$. This means $I(\alpha f +(1-\alpha)\theta\mathbf{1})=\theta_{\alpha f}+(1-\alpha)\theta=I(\alpha f)+(1-\alpha)\theta$. Moreover, by Axiom~\ref{axiom: Convexity}, $I(f)=I(g)$ implies $I(\alpha f+(1-\alpha)g)\geq I(f)$. It follows from Lemma~25 of \cite{Maccheroni2006} that $I$ is a concave normalized niveloid.\footnote{For the definitions of a niveloid and a niveloid being  normalized, see Appendix~A of \cite{Maccheroni2006}.} This together with their Lemma~26 implies that  there exists a grounded, convex, and lower semi-continuous function $c: \Delta(D)\rightarrow [0,\infty]$ such that 
 $$I(f)=\min_{\mu\in \Delta(D)}\left(\sum_{\delta\in D}\Big(f(\delta)\mu(\delta)\Big)+c(\mu)\right).$$
This proves the lemma.
\end{proof}

It remains to show $c$ is nondegenerate. By  Axiom~\ref{axiom: Convexity}(b), $I$ is not affine and hence, by Lemma~\ref{lemma: affine and nondegenerate}, $\text{dom}(c)$ is not a singleton or, equivalently, $c$ is nondegenerate.

\subsection{Proof of Theorem~\ref{Theorem: VP}}\label{section: Proof of Theorem VP}
The only difference between Axioms~\ref{axiom: ICOU} and \ref{axiom: I0OU} is that constant streams in the former are replaced by streams in $\ell_{\infty}^0$ in the latter. A key property of $\phi$ used in the above proof is $\phi(z^0)=x_0$ for any $z^0=(x_0, 0, 0,\ldots)\in \ell_{\infty}^0$. In some sense, we can regard $\phi$ as an identity function on $\ell_{\infty}^0$. Define $\phi_a: \ell_{\infty}\rightarrow \mathbb{R}^N$ by $\phi_a(x)=((1-\delta_i)\langle \delta_i, x\rangle)_{i=1}^N$. Then the same property holds also for $\phi_a$ on the set of constant streams; that is, $\phi_a(\theta)=\theta$ for any constant stream $\theta\in  \ell_{\infty}$.  With this property, the proof of Theorem~\ref{Theorem: VP} is the same as that of Theorem~\ref{Theorem: VP  total} by replacing $\phi$ there with $\phi_a$.

\subsection{Proof of  Corollary~\ref{Corollary: multi-utilitarian}}\label{section: Proof of Corollary multi-utilitarian}
\begin{lemma}\label{lemma: C-independence}
Axiom~\ref{axiom: I0OS} is equivalent to the condition that for all $f, g\in \mathfrak{F}$, $\alpha\in [0,1]$, and $c\in \mathbb{R}$,  $f\succcurlyeq^* g$ iff $\alpha f+(1-\alpha)c\mathbf{1}\succcurlyeq^* \alpha g+(1-\alpha)c\mathbf{1}$.
\end{lemma}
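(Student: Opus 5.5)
We prove the lemma by translating Axiom~\ref{axiom: I0OS} through the map $\phi$. Two facts from Lemma~\ref{lemma: image} drive the argument. First, $\phi$ is linear and surjective. Second, $\phi(z^0)=x_0\mathbf{1}$ for $z^0=(x_0,0,0,\ldots)\in\ell_{\infty}^0$, so $\phi$ maps $\ell_{\infty}^0$ onto the constant acts. We also use that $\succeq^*$ is well defined, by Axiom~\ref{axiom: D-monotonicity}(a), so any representatives $x\in\phi^{-1}(f)$ and $y\in\phi^{-1}(g)$ may be used. The plan is to show that Axiom~\ref{axiom: I0OS} is equivalent to the statement
\[
f\succeq^* g \iff af+c\mathbf{1}\succeq^* ag+c\mathbf{1}\quad\text{for all } a>0,\ c\in\mathbb{R}, \tag{$\ast$}
\]
and then that ($\ast$) is equivalent to the mixture condition in the lemma.

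\emph{Step 1: Axiom~\ref{axiom: I0OS} is equivalent to ($\ast$).} Take $f,g$ with representatives $x,y$, and let $z^0=(c,0,0,\ldots)$. By linearity, $\phi(ax+z^0)=af+c\mathbf{1}$ and $\phi(ay+z^0)=ag+c\mathbf{1}$. Hence ``$x\succeq y$ iff $ax+z^0\succeq ay+z^0$'' is literally the statement ``$f\succeq^* g$ iff $af+c\mathbf{1}\succeq^* ag+c\mathbf{1}$''. Surjectivity of $\phi$ onto $\mathfrak{F}$, and of $\phi$ from $\ell_{\infty}^0$ onto the constants, lets the quantifiers over streams and over acts match exactly. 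This gives both directions.

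\emph{Step 2: ($\ast$) implies the mixture condition.} For $\alpha\in(0,1]$, apply ($\ast$) with $a=\alpha$ and constant $(1-\alpha)c$. For $\alpha=0$ the two sides of the mixture condition both reduce to $c\mathbf{1}$, so they are indifferent, while $f\succeq^* g$ need not hold. The lemma must therefore be read for $\alpha\in(0,1]$, as in Maccheroni et al.; I will note this convention explicitly.

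\emph{Step 3: the mixture condition implies ($\ast$).} This is the main obstacle, since mixtures only produce scalings $a\le 1$. Two reductions handle it.
\begin{itemize}
\item \emph{Pure translation.} Take $\alpha=1/2$ and constant $2c$. This shows $f\succeq^* g$ iff $\tfrac12 f+c\mathbf{1}\succeq^* \tfrac12 g+c\mathbf{1}$. Taking $c=0$ gives $f\succeq^* g$ iff $\tfrac12 f\succeq^*\tfrac12 g$. Applying this to $\tfrac12 f$ and $\tfrac12 g$ shows that $\succeq^*$ is invariant under adding any constant $c\mathbf{1}$.
\item \emph{Scaling up.} For $a>1$, put $f'=af$ and $g'=ag$, and apply the mixture condition with $\alpha=1/a$ and $c=0$. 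This gives $af\succeq^* ag$ iff $f\succeq^* g$.
\end{itemize}
Scalings with $a\le 1$ come directly from the mixture condition with $c=0$. Combining arbitrary positive scaling with constant translation yields ($\ast$). Step 1 then returns Axiom~\ref{axiom: I0OS}.
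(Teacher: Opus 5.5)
Your proof is correct and follows the paper's route: you transport Axiom~\ref{axiom: I0OS} through the linear surjection $\phi$, using $\phi(c,0,0,\ldots)=c\mathbf{1}$, and take $a=\alpha$, $z^0=((1-\alpha)c,0,0,\ldots)$. It is also more complete than the paper's argument, in two ways: you rightly flag that $\alpha=0$ must be excluded, since as stated it would force $\succeq^*$ to be trivial; and you actually prove the converse (recovering scalings $a>1$ via $\alpha=1/a$ and pure translations via the $\alpha=1/2$ trick), which the paper dismisses as ``likewise'' even though mixtures alone only give scalings $a\le 1$.
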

\begin{proof}
Assume Axiom~\ref{axiom: I0OS} and take any $f, g\in \mathfrak{F}$ with $f\succcurlyeq^* g$. Then we have $x\succcurlyeq y$, where $x\in \phi^{-1}(f)$ and $y\in \phi^{-1}(g)$. For $\alpha\in [0,1]$ and $c\in \mathbb{R}$, we have, under Axiom~\ref{axiom: I0OS}, $\alpha x+z^0 \succcurlyeq \alpha y+z^0$, where $z^0=((1-\alpha)c, 0, 0, \ldots)$. Since $\phi(\alpha x+z^0)=\alpha f+(1-\alpha)c\mathbf{1}$ and $\phi(\alpha y+z^0)=\alpha g+(1-\alpha)c\mathbf{1}$, we get $\alpha f+(1-\alpha)c\mathbf{1}\succcurlyeq^* \alpha g+(1-\alpha)c\mathbf{1}$. The reverse direction can be proved likewise.
\end{proof}

In other words, Axiom~\ref{axiom: I0OS} implies $\succcurlyeq^*$ satisfies the C-independence axiom of \cite{Gilbola1989}. Then Corollary~\ref{Corollary: multi-utilitarian} is a joint consequence of their Theorem~1 and Theorem~\ref{Theorem: VP total} of the present paper \citep{Maccheroni2006}.

\subsection{Proof of Proposition~\ref{Proposition: exponential}}\label{section: proof of Proposition exponential}
 For any two acts $f, g\in \mathfrak{F}$ and a subset $E\subseteq D$, $f_Eg$ denotes the act which equals $f(\delta)$ for $\delta\in E$ and equals $g(\delta)$ for $\delta\in E^c$. We say $\succeq^*$ satisfies the Sure-thing principle if for any acts $f, g, h, h'\in \mathfrak{F}$ and a subset $E\subset D$, $f_E h \succeq^* g_Eh$ implies $f_E h' \succeq^* g_Eh'$.

\begin{lemma}\label{lemma: sure-thing}
Axiom~\ref{axiom: IUE} is equivalent to $\succeq^*$ satisfying the Sure-thing principle.
\end{lemma}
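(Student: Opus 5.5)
The plan is to prove both directions by translating between streams and acts through the map $\phi$ of Lemma~\ref{lemma: image}, using the definition of $\succeq^*$. Throughout, $\succeq$ is assumed to satisfy Axiom~\ref{axiom: D-monotonicity}(a), so that $\succeq^*$ is well defined and $\phi(x)=\phi(y)$ implies $x\sim y$. Note first that for any streams $x,y$ and any $E\subseteq D$, the conditions in Axiom~\ref{axiom: IUE} are statements about the acts $\phi(x),\phi(y),\phi(x'),\phi(y')$ only. Write $f=\phi(x)$, $g=\phi(y)$, $f'=\phi(x')$, $g'=\phi(y')$. The hypotheses then say exactly:
\begin{itemize}
\item $f=f'$ and $g=g'$ on $E$;
\item $f=g$ on $E^c$, say with common value $h$;
\item $f'=g'$ on $E^c$, say with common value $h'$.
\end{itemize}
Hence $f=f_Eh$, $g=g_Eh$, $f'=f_Eh'$ and $g'=g_Eh'$. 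The conclusion $x\succeq y \iff x'\succeq y'$ is, by definition of $\succeq^*$, the same as $f_Eh\succeq^* g_Eh \iff f_Eh'\succeq^* g_Eh'$.

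For Axiom~\ref{axiom: IUE} $\Rightarrow$ Sure-thing, take arbitrary acts $f,g,h,h'$ and $E\subset D$. By Lemma~\ref{lemma: image}(iii), $\phi$ is surjective, so I pick streams $x\in\phi^{-1}(f_Eh)$, $y\in\phi^{-1}(g_Eh)$, $x'\in\phi^{-1}(f_Eh')$ and $y'\in\phi^{-1}(g_Eh')$. These four streams satisfy the premises of Axiom~\ref{axiom: IUE} with the same $E$. Therefore $f_Eh\succeq^* g_Eh$ gives $x\succeq y$, hence $x'\succeq y'$, hence $f_Eh'\succeq^* g_Eh'$.

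For the converse, take streams $x,y,x',y'$ and $E$ satisfying the premises of Axiom~\ref{axiom: IUE}. Decompose their images as above, with $f:=\phi(x)$, $g:=\phi(y)$, $h:=\phi(x)|_{E^c}$ and $h':=\phi(x')|_{E^c}$. Then $\phi(x)=f_Eh$, $\phi(y)=g_Eh$, $\phi(x')=f_Eh'$ and $\phi(y')=g_Eh'$. The Sure-thing principle, applied in both directions by swapping the roles of $h$ and $h'$, gives the biconditional. Well-definedness of $\succeq^*$ lets me transfer this back to $x,y,x',y'$.

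The only delicate point is bookkeeping on the edge cases. When $E=D$ or $E=\emptyset$, the statements are trivial or vacuous on one side, and I need to check that the quantifier ranges over ``$E\subset D$'' match in the two formulations. I also need to remember that surjectivity, from Lemma~\ref{lemma: image}(iii), is what makes arbitrary acts $h,h'$ available.
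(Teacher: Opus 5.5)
Your proposal is correct and is essentially the paper's proof. For Axiom~\ref{axiom: IUE} $\Rightarrow$ Sure-thing, the paper builds the four preimage streams by solving a block Vandermonde system; this is just the surjectivity of $\phi$ from Lemma~\ref{lemma: image}(iii) that you invoke. For the converse, the paper decomposes $\phi(x),\phi(y),\phi(x'),\phi(y')$ as $f_Eh$, $g_Eh$, $f_Eh'$, $g_Eh'$ exactly as you do. The edge cases $E=\emptyset$ and $E=D$ need no separate treatment, because your argument is uniform in $E$ and both formulations quantify over the same subsets.
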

\begin{proof}
Assume Axiom~\ref{axiom: IUE} and take any $E=\{\delta_1, \ldots, \delta_M\}\subseteq D$. Let $\mathbf{M}(E)$ be the Vandermonde matrix:
$$\mathbf{M}(E)=
\begin{bmatrix}
  1 & \delta_1 & \cdots & \delta_1^{N-1} \\
  1 & \delta_2 & \cdots & \delta_2^{N-1} \\
  \vdots &  \vdots &  \ddots&  \vdots \\
1 & \delta_M & \cdots & \delta_M^{N-1}
\end{bmatrix},
$$
and similarly for $\mathbf{M}(E^c)$.  For any $f, g, h\in \mathfrak{F}$, consider the following linear system of equations:
$$
\begin{bmatrix}
  \mathbf{M}(E) & 0 \\
  0& \mathbf{M}(E) \\
  \mathbf{M}(E^c) & 0 \\
  0& \mathbf{M}(E^c) \\
\end{bmatrix}
\begin{bmatrix}
\bar{x}^\top\\
\bar{y}^\top
 \end{bmatrix}=
 \begin{bmatrix}
   f|_E \\
   g|_E \\
   h|_{E^c} \\
   h|_{E^c}
 \end{bmatrix}
$$
where $\bar{x}=(x_0, \ldots, x_{N-1})$ and $\bar{y}=(y_0, \ldots, y_{N-1})$ and the symbol, $\top$, stands for the transpose of a matrix. Since the coefficient matrix of the system is nonsingular,  it has a unique solution, denoted $(\bar{x}^*, \bar{y}^*)$. Let $x=(\bar{x}^*, 0, 0, \ldots)$ and $y=(\bar{y}^*, 0, 0, \ldots)$. Similarly, for any $h'\in \mathfrak{F}$, let $(\bar{x}', \bar{y}')$ be the solution to the above linear system with $h$ on the right hand side replaced by $h'$, and $x'=(\bar{x}', 0, 0, \ldots)$ and $y'=(\bar{y}', 0, 0, \ldots)$. Then we have $\phi(x)=f_Eh$, $\phi(y)=g_Eh$, $\phi(x')=f_Eh'$, and $\phi(y')=g_Eh'$. By  Axiom~\ref{axiom: IUE} and the definition of $\succeq^*$, it follows that $\succeq^*$ satisfies the Sure-thing principle.

To prove the reverse implication, suppose that $\succeq^*$ satisfies the Sure-thing principle. Take any four streams $x, y, x', y'$ and a subset $E\subset D$ such that $\langle\delta, x\rangle = \langle\delta, x'\rangle$, $\langle\delta, y\rangle = \langle\delta, y'\rangle$ for all $\delta\in E$ and $\langle\delta, x\rangle = \langle\delta, y\rangle$, $\langle\delta, x'\rangle = \langle\delta, y'\rangle$ for all $\delta\in E^c$. Let $f=\phi(x)$, $g=\phi(y)$, $f'=\phi(x')$, and $g'=\phi(y')$, so that $f=f'$, $g=g'$ on $E$, and $f=g$, $f'=g'$ on $E^c$. It follows from the Sure-thing principle that $f\succeq^* g$ iff $f'\succeq^* g'$, hence that $x\succcurlyeq y$ iff $x'\succcurlyeq y'$. 
\end{proof}

With Lemma~\ref{lemma: sure-thing}, Proposition~\ref{Proposition: exponential} is a consequence of the result of \citet{Strzalecki2011}. Specifically, for the implication (\rmnum{1})$\Rightarrow$ (\rmnum{2}), by Theorem~\ref{Theorem: VP total},  $\succcurlyeq^*$ can be represented by \eqref{equation: VP*}. Axioms~\ref{axiom: D-monotonicity}(b) implies every state $\delta\in D$ is non-null. By Lemma~\ref{lemma: sure-thing} and Theorem~1 of \citet{Strzalecki2011}, $\succcurlyeq^*$ can be represented by
\begin{equation*}
  f\succeq^*g \Leftrightarrow \min_{\mu\in \Delta(D)}\left(\sum_{\delta\in D} f(\delta)\mu(\delta)+\sigma R(\mu||\nu)\right) \geq \min_{\mu\in \Delta(D)}\left(\sum_{\delta\in D} g(\delta)\mu(\delta)+\sigma R(\mu||\nu)\right).
\end{equation*}
By the variational formula \citep[][Proposition 1.4.2, pp. 33--34]{Dupuis1997}, we have
$$\min_{\mu\in \Delta(D)}\left(\sum_{\delta\in D} f(\delta)d\mu(\delta)+\sigma R(\mu||\nu)\right) =\xi_\sigma^{-1}\left(  \sum_{\delta\in D}\xi_{\sigma}\left(\langle \delta, x\rangle\right)\nu(\delta)\right).$$
From this equation and Axioms~\ref{axiom: D-monotonicity}(b) and \ref{axiom: Convexity}(b), it follows that $\nu \in \Delta_{++}(D)$ and $\sigma<\infty$. This completes the proof of (\rmnum{1})$\Rightarrow$ (\rmnum{2}). The reverse implication can be proved likewise and is omitted here.

\subsection{Proof of Theorem~\ref{Theorem: SOEU total}}\label{section: Proof of Theorem SOEU total}

Let us start with the implication (\rmnum{1}) $\Rightarrow$ (\rmnum{2}). First, by Axioms~\ref{axiom: D-monotonicity} and \ref{axiom: Continuity}, it follows from the proof of Lemma~\ref{lemma: VP*} that  there exist for each $f\in \mathfrak{F}$ a real number $\theta_f$ such that $f\thicksim^* \theta_f \mathbf{1}$ and that the function $I: \mathfrak{F}\rightarrow \mathbb{R}$ defined by $I(f)=\theta_f$ is normalized, continuous, and monotone.

Third, we demonstrate the following lemma. 
\begin{lemma}\label{lemma: ambiguity aversion}
There exists $\mu\in \Delta_{++}(D)$ such that for all $f\in \mathfrak{F}$ and $\theta\in \mathbb{R}$, $f\succeq^*  \theta\mathbf{1}$ implies $\mu\cdot f\geq \theta$.
\end{lemma}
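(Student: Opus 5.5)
The plan is to obtain $\mu$ as a supporting hyperplane to the upper contour sets of $\succeq^*$ at constant acts, and to use Axiom~\ref{axiom: Translation Invariance 0} to make that hyperplane independent of the level $\theta$. Full support of $\mu$ is handled last and is the main obstacle.

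\emph{Convexity of upper contour sets.} For each $\theta\in\mathbb{R}$ let $A_\theta=\{f\in\mathfrak{F}: f\succeq^*\theta\mathbf{1}\}$. By the function $I$ constructed above, $A_\theta=\{f: I(f)\geq\theta\}$, which is closed because $I$ is continuous. Next I would show $A_\theta$ is convex, i.e.\ that $I$ is quasi-concave. Axiom~\ref{axiom: Convexity}(a) transfers to $\succeq^*$ exactly as in the proof of Lemma~\ref{lemma: VP*}, giving $I(f)=I(g)\Rightarrow I(\alpha f+(1-\alpha)g)\geq I(f)$. To pass from indifferent pairs to $I(f)\geq I(g)$, I would use continuity and monotonicity of $I$. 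Lower $f$ toward $g$ by subtracting a constant, $f'=f-t\mathbf{1}$ with $I(f')=I(g)$; such a $t\ge 0$ exists by continuity and monotonicity. Then apply the indifference version to the pair $(f',g)$ and use monotonicity, since $\alpha f+(1-\alpha)g\geq \alpha f'+(1-\alpha)g$. This is the standard argument of \cite{Maccheroni2006}.

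\emph{Supporting hyperplane at each level.} Fix $\theta$. Strict monotonicity (Axiom~\ref{axiom: D-monotonicity}(b), transported through $\phi$) gives $\theta\mathbf{1}-\epsilon\mathbf{1}\notin A_\theta$ for every $\epsilon>0$, so $\theta\mathbf{1}$ lies on the boundary of $A_\theta$. The supporting hyperplane theorem yields a nonzero $\mu_\theta$ with $\mu_\theta\cdot f\geq\mu_\theta\cdot\theta\mathbf{1}$ for all $f\in A_\theta$. Since $A_\theta$ is closed under adding nonnegative vectors (by monotonicity), $\mu_\theta\geq 0$, and I normalize $\mu_\theta\in\Delta(D)$.

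\emph{Removing the dependence on $\theta$.} Let $K_\theta$ be the closed convex cone generated by $A_\theta-\theta\mathbf{1}$, i.e.\ the closure of the set of preference-increasing directions at $\theta\mathbf{1}$. Lemma~\ref{lemma: image}(\rmnum{2}) identifies $\theta\mathbf{1}$ with the stream $(\theta,0,0,\ldots)\in\ell_\infty^0$. Under this identification, Axiom~\ref{axiom: Translation Invariance 0} says that $d$ is a preference-increasing direction at $\theta\mathbf{1}$ iff it is one at $\theta'\mathbf{1}$. Hence $K_\theta=K$ does not depend on $\theta$. The set $M=\{\mu\in\Delta(D):\mu\cdot d\geq 0\ \forall d\in K\}$ is then a nonempty, convex, compact set of supports valid at every level $\theta$. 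Any $\mu\in M$ satisfies the implication $f\succeq^*\theta\mathbf{1}\Rightarrow\mu\cdot f\geq\theta$.

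\emph{Full support (main obstacle).} It remains to find some $\mu\in M\cap\Delta_{++}(D)$. Because $M$ is convex, it suffices to find, for each $\delta\in D$, some $\mu^\delta\in M$ with $\mu^\delta(\delta)>0$; averaging the $\mu^\delta$ then gives a full-support element of $M$.

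\begin{itemize}
\item Suppose instead that every $\mu\in M$ vanishes at some $\delta$. Then $K^*$ lies in the face $\{\mu:\mu(\delta)=0\}$, so by duality $K$ contains $-e_\delta$.
\item My first attempt at a contradiction would exploit separability. With $N\geq 3$, the Sure-thing principle (Lemma~\ref{lemma: sure-thing}) and continuity give a Debreu--Gorman additive representation $I\sim\sum_{\delta}u_\delta(f(\delta))$. Strict monotonicity makes each $u_\delta$ strictly increasing, and the convexity axiom makes each $u_\delta$ concave.
\item Concavity gives a finite positive right-derivative of $u_\delta$ at $\theta$. If $-e_\delta\in K$, then at the constant act the gain from improving state $\delta$ would have to be nonpositive, contradicting that right-derivative.
\end{itemize}

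Making this precise, in particular controlling one-sided derivatives without assuming differentiability, is the step I expect to require the most care.
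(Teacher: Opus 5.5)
Your first three steps are correct. The third is more self-contained than the paper's argument. The paper derives translation invariance at certainty and invokes Rigotti, Shannon and Strzalecki's Proposition~8 to get $\Pi(\theta')=\Pi(\theta)$. You instead translate Axiom~\ref{axiom: Translation Invariance 0} through $\phi$, using $\phi(\theta,0,0,\ldots)=\theta\mathbf{1}$, and show directly that the closed cone of preference-increasing directions at $\theta\mathbf{1}$ does not depend on $\theta$. Your $M$ is therefore exactly the paper's $\Pi(\theta)$ for every $\theta$. Your quasi-concavity argument also supplies a step the paper only asserts.

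The gap is the full-support step, which you leave unfinished and plan to attack through a Debreu--Gorman additive representation. That route is much heavier than needed: it brings in Axiom~\ref{axiom: IUE}, which this lemma does not use. It is also not justified as written. Quasi-concavity of $\sum_\delta u_\delta$ does not by itself make every $u_\delta$ concave; Debreu and Koopmans show in general only that at most one component can fail to be concave. So the finite positive one-sided derivatives you rely on are not yet available.

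The missing idea is that Axiom~\ref{axiom: D-monotonicity}(b) together with continuity already forces \emph{every} element of $M$ to have full support. This is what the paper's one-line remark ``again by Axiom~\ref{axiom: D-monotonicity}, $\mu\gg 0$'' amounts to, and it makes your averaging reduction and the duality step $-e_\delta\in K$ unnecessary. The argument runs as follows.
\begin{itemize}
\item Suppose $\mu\in M$ and $\mu(\delta_j)=0$ for some $j$. Let $S=\{\delta\in D:\mu(\delta)>0\}$, let $\mathbf{1}_S$ be its indicator, and let $e_{\delta_j}$ be the indicator of $\delta_j$.
\item By Axiom~\ref{axiom: D-monotonicity}(b), $I(\theta\mathbf{1}+e_{\delta_j})>\theta$.
\item By continuity of $I$ there is $\varepsilon>0$ with $g:=\theta\mathbf{1}+e_{\delta_j}-\varepsilon\mathbf{1}_S\succeq^*\theta\mathbf{1}$, so $g-\theta\mathbf{1}\in K$.
\item Yet $\mu\cdot g=\theta+\mu(\delta_j)-\varepsilon\mu(S)=\theta-\varepsilon<\theta$, contradicting $\mu\in M$.
\end{itemize}
Adding this argument closes your proof without the Sure-thing principle and without any differentiability considerations.
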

\begin{proof}
Fix $\theta\in \mathbb{R}$ and consider the set
$$B(\theta)=\{f\in \mathfrak{F}: I(f)\geq I(\theta\mathbf{1})\}.$$
By Axiom~\ref{axiom: D-monotonicity}, $B(\theta)$ has a nonempty interior and $\theta\mathbf{1}$ is its boundary point. By Axiom~\ref{axiom: Convexity}, $B(\theta)$ is convex. Then it follows from the separating hyperplane theorem \citep[see, e,g., ][Lemma~7.7, p. 259]{Aliprantis2006a} that there exists $\mu\in \mathbb{R}^N$ such that $\mu\cdot (f-\theta\mathbf{1})\geq 0$ for all $f\in B(\theta)$. Again by Axiom~\ref{axiom: D-monotonicity}, we have $\mu\gg 0$. Normalize $\mu$ such that $\mu\in \Delta(D)$.
For any $\theta'\in \mathbb{R}$, let
$$\Pi(\theta')=\{\nu\in \Delta(D): \nu\cdot (f-\theta'\mathbf{1})\geq 0 \text{ for all }f\succeq \theta'\mathbf{1}\},$$
so that $\mu\in \Pi(\theta)$. It suffices to show  $\mu\in \Pi(\theta')$  for all $\theta'\in \mathbb{R}$. As a consequence of Axiom~\ref{axiom: Translation Invariance 0}, we get the axiom of translation invariance at certainty of \cite{Rigotti2008}: For all $f\in \mathfrak{F}$ and $c\in \mathbb{R}$, if $\alpha f+c\mathbf{1}\succcurlyeq^* c\mathbf{1}$ for some $\alpha>0$, then there exists for any other $c'\in \mathbb{R}$ an $\alpha'>0$ such that $\alpha' f+c'\mathbf{1}\succcurlyeq^* c'\mathbf{1}$. It then follows from their Proposition~8 that 
\begin{equation}\label{equation: Pi theta constant}
\Pi(\theta')=\Pi(\theta) \text{ for all }\theta'\in \mathbb{R},
\end{equation}
hence that $\mu\in \Pi(\theta')$.
\end{proof}

Combining the Sure-thing principle and Lemma~\ref{lemma: ambiguity aversion}, it follows from the proof of Theorem~2 of  \cite{Grant2009} that there exists a continuous, concave, and strictly increasing function $u$ on $\mathbb{R}$ such that $\succcurlyeq^*$ is represented by
$$J(f)= \sum_{i=1}^{N}\mu_i u(f_i).$$
It remains to show $u$ is differentiable and strictly concave. To show the differentiability of $u$, let $\partial u(\theta)$ denote the superdifferential of $u$ at $\theta$. Since $u$ is concave, there exists $\theta_0\in \mathbb{R}$ such that $\partial u(\theta_0)$ is a singleton. By Lemma~1 of  \cite{Rigotti2008}, 
$$\Pi(\theta)=\left\{\frac{(\mu_1 q_1, \ldots, \mu_N q_N)}{\mu_1 q_1+\cdots+ \mu_N q_N}: q_i\in \partial u(\theta) \text{ for all }i=1, \ldots, N\right\},$$
so that $\Pi(\theta_0)$ is a singleton. This along with \eqref{equation: Pi theta constant} implies $\Pi(\theta)$ and hence $\partial u(\theta)$ must be a singleton for all $\theta\in \mathbb{R}$. As a consequence, $u$ is differentiable everywhere on $\mathbb{R}$.

For the strict concavity of $u$, assume by way of contradiction that it is not strictly concave. Then there exists a non-degenerate interval $[a, b]\subset \mathbb{R}$ such that $u$ is affine on it \citep[see][Problems and Remark, A (4), p. 7]{Roberts1973}. Let $\varpi=(a+b)/2$. Take an act $f$ with
\begin{equation}\label{equation: construction of f}
f_1=\varpi+\kappa \mu_2, f_2=\varpi-\kappa \mu_1,  f_i=\varpi \text{ for }i\geq 3,
\end{equation}
so that $f\in [a, b]^N$ for  a sufficiently small and positive number $\kappa$. Since $u$ is affine on $[a, b]$, we have $J(f)=u(\sum_{i=1}^{N}\mu_if_i)=u(\varpi)$ and hence $f\thicksim^*\varpi\mathbf{1}$. By Axiom~\ref{axiom: Translation Invariance 0}, there exists a $\varsigma>0$ such that $g:=\varsigma(f-\varpi\mathbf{1})\succeq^* 0$. 

On the other hand, Axiom~\ref{axiom: Convexity}(b) implies $u$ is not affine on $\mathbb{R}$. Then there exists $\theta^*\in \mathbb{R}$ that is not contained in any nondegenerate interval $J$ on which $u$ is affine. To see this, assume for the sake of contradiction that for every $t \in \mathbb{R}$, there exists a nondegenerate interval $\mathbf{I}_t \subseteq \mathbb{R}$ containing $t$ such that $u$ is affine on $\mathbf{I}_t$. Since $u$ is affine on $\mathbf{I}_t$, there exist constants $a_t, b_t \in \mathbb{R}$ such that $u(r) = a_t r + b_t$ for all $r \in \mathbf{I}_t$. Because $u$ is differentiable on $\mathbb{R}$,  $u'(r) = a_t$ for all $r\in \mathbf{I}_t$. Since $t \in \mathbf{I}_t$, it follows that $u'(t) = a_t$, and thus $u'(r) = u'(t)$ for all $r \in \mathbf{I}_t$. Because $\mathbf{I}_t$ is a nondegenerate interval, there exists at least one rational number $q_t \in \mathbf{I}_t$, so that, $u'(t)=u'(q_t)$. Therefore,  $u'(\mathbb{R})$ is a subset of $u'(\mathbb{Q})$, where $\mathbb{Q}$ is the set of rational numbers. Since $\mathbb{Q}$ is countable, the set $u'(\mathbb{Q})$ is at most countable and hence so is $u'(\mathbb{R})$. However, by Darboux's Theorem \citep[see][Theorem 5-13, pp. 94--95]{Apostol1957},  the derivative $u'$ satisfies the Intermediate Value Property, and therefore $u'(\mathbb{R})$ must be an interval. But the only intervals in $\mathbb{R}$ that contain at most countably many points are degenerate intervals (i.e., singletons). Therefore, $u'(\mathbb{R}) = \{c\}$ for some constant $c \in \mathbb{R}$, which implies  $u$ is globally affine on $\mathbb{R}$, a contradiction.

According to Axiom~\ref{axiom: Translation Invariance 0}, there exist $\alpha>0$ such that $\alpha g+\theta^*\mathbf{1}\succeq^* \theta^*\mathbf{1}$. By the concavity of $u$,  we have
\begin{align*}
  J(\alpha g+\theta^*\mathbf{1}) & = \sum_{i=1}^{N}\mu_i u(\alpha g_i+\theta^*) \leq u(\alpha\sum_{i=1}^{N}\mu_i  g_i+\theta^*)=u(\theta^*).
\end{align*}
This means $\sum_{i=1}^{N}\mu_i u(\alpha g_i+\theta^*)= u(\alpha\sum_{i=1}^{N}\mu_i  g_i+\theta^*)$. Therefore, $u$ must be affine on the convex hull generated by $\{\alpha g_i+\theta^*: i=1, 2, \ldots, N\}$ \citep[see][Theorem~90]{Hardy1934}. From \eqref{equation: construction of f}, we have $g_1>0$ and $g_2<0$, and hence $\theta^*$ is an interior point of that convex hull, a contradiction. This proves the strict concavity of $u$.

We proceed to show  (\rmnum{2}) $\Rightarrow$ (\rmnum{1}). Suppose that  (\rmnum{2}) holds. For notational convenience, let $\mu=(\mu_1, \ldots, \mu_N)$ and
$$V(x)=\sum_{i=1}^N u\left(\langle \delta_i, x\rangle\right)\mu_i.$$
By appealing to the strict increasingness, strict concavity, and continuity of $u$, one can easily verify the validity of Axioms~\ref{axiom: D-monotonicity}, \ref{axiom: Continuity}, \ref{axiom: Convexity}, and \ref{axiom: IUE}.  

It remains to show Axiom~\ref{axiom: Translation Invariance 0}. For any $z^0=(c, 0, 0, \ldots)$, suppose $\alpha x+z^0\succeq z^0$ for some $\alpha>0$ and let $z^{0'}=(c', 0, 0, \ldots)$. Define $U: \mathfrak{F}\rightarrow \mathbb{R}$ by $V=U\circ \phi$; let $g=\phi(x)=(g_1, \ldots, g_N)$ and $\bar g=\sum_{i=1}^N\mu_ig_i.$ By the concavity of $u$ and Jensen's inequality
\begin{equation}\label{equation: jensen}
U(c+\alpha g)=\sum_{i=1}^N\mu_i u(c+\alpha g_i)\leq u\left(c+\alpha\sum_{i=1}^N\mu_i g_i\right)=u(c+\alpha \bar g).
\end{equation}
This along with $\alpha x+z^0\succeq z^0$ yields $ u(c+\alpha\bar g)\geq u(c)$. Since $u$ is strictly increasing, it follows that $\bar{g}\geq 0$. We divide into two cases.

\textbf{Case 1: $\bar g>0$.} Define $H: \mathbb{R}\rightarrow \mathbb{R}$ by
$$H(t)=U(tg+c'\mathbf{1})-U(c' \mathbf{1}).$$
As $u$ is differentiable, so is $H$. A direct calculation yields $H'(0)=u'(c')\bar{g}$, and hence $H'(0)>0$. Therefore, there exists $\tau>0$ such that $H(t)>0$ for all $t\in (0, \tau)$. Choosing any $\alpha'\in (0, \tau)$, we obtain $U(\alpha' g+c'\mathbf{1})>U(c' \mathbf{1})$ or, equivalently, $\alpha' x+z^{0'} \succeq z^{0'}$.

\textbf{Case 2: $\bar g=0$.} Since $\alpha x+z^0\succeq z^0$ and $\bar g=0$, it follows that $U(c+\alpha g)\geq u(c)=u(c+\alpha \bar g)$. By \eqref{equation: jensen}, we get $\sum_{i=1}^N\mu_i u(c+\alpha g_i)= u\left(c+\alpha\sum_{i=1}^N\mu_i g_i\right)$. By the strict concavity of $u$, this implies that there exists a number $\kappa$ such that $g_i=\kappa$ for all $i=1, \ldots, N$, which, together with $\bar g=0$, gives $\kappa=0$. Consequently,  for any $\alpha'>0$, $U(\alpha' g+c'\mathbf{1})=U(c' \mathbf{1})$ or, equivalently, $\alpha' x+z^{0'} \thicksim z^{0'}$. This completes the proof of Axiom~\ref{axiom: Translation Invariance 0}.

\end{appendices}

\bibliographystyle{apa}
\bibliography{library}
\end{document}